\newif\iflong
\theoremstyle{break}
\newcommand\enumproblem[4]{%
\begin{center}
\begin{tabular}{lp{9cm}}\toprule
\textsf{\bfseries Problem:}& #1 \\\midrule
\textsf{\bfseries Input:}& #2.\\
\textsf{\bfseries Parameter:}& #3.\\
\textsf{\bfseries Output:}& #4.\\\bottomrule
\end{tabular}
\end{center}
}
\newcommand{\funcparaproblemdef}[4]{%
\begin{center}
\begin{tabular}{lp{9cm}}\toprule
\textsf{\bfseries Problem:}& #1 \\\midrule
\textsf{\bfseries Input:}& #2.\\
\textsf{\bfseries Parameter:}& #3.\\
\textsf{\bfseries Task:}& #4.\\\bottomrule
\end{tabular}
\end{center}
}
\newcommand{\logicClFont}[1]{\mathrm{#1}}
\newcommand{\problemFont}[1]{\textsc{#1}}
\newcommand{\complClFont}[1]{\mathsf{#1}}
\newcommand{\N}{\protect\ensuremath{\mathbb N}\xspace}
\renewcommand{\P}{\protect\ensuremath{\complClFont{P}}\xspace}
\newcommand{\PTime}{\P}
\newcommand{\NP}{\protect\ensuremath{\complClFont{NP}}\xspace}
\newcommand{\FP}{\protect\ensuremath{\F\P}\xspace}
\newcommand{\FNP}{\protect\ensuremath{\F\NP}\xspace}
\newcommand{\FPT}{\protect\ensuremath{\complClFont{FPT}}}
\newcommand{\para}{\protect\ensuremath{\complClFont{para\text-}}}
\newcommand{\Wone}{\protect\ensuremath{\W1}}
\newcommand{\W}[1]{\protect\ensuremath{\complClFont{W}[#1]}}
\newcommand{\enumP}{\protect\ensuremath{\complClFont{P\text-enum}}}
\newcommand{\outputP}{\protect\ensuremath{\complClFont{OutputP}}}
\newcommand{\delayP}{\protect\ensuremath{\complClFont{DelayP}}}
\newcommand{\capincP}{\protect\ensuremath{\complClFont{CapIncP}}}
\newcommand{\incP}{\protect\ensuremath{\complClFont{IncP}}}
\newcommand{\totalP}{\protect\ensuremath{\complClFont{\outputP}}}
\newcommand{\outputFPT}{\protect\ensuremath{\complClFont{OutputFPT}}}
\newcommand{\FPTenum}{\protect\ensuremath{\complClFont{EnumFPT}}}
\newcommand{\delayFPT}{\protect\ensuremath{\complClFont{DelayFPT}}}
\newcommand{\incFPT}{\protect\ensuremath{\complClFont{IncFPT}}}
\newcommand{\capincFPT}{\protect\ensuremath{\complClFont{CapIncFPT}}}
\newcommand{\totalFPT}{\protect\ensuremath{\complClFont{TotalFPT}}}
\newcommand{\enum}{\protect\ensuremath{\problemFont{Enum}\text-}}
\newcommand{\anotherSol}{\protect\ensuremath{\problemFont{AnotherSol}}}
\newcommand{\AllSol}{\protect\ensuremath{\calS}}
\newcommand{\calA}{\mathcal A}
\newcommand{\calS}{\mathcal S}
\newcommand{\Sol}{\protect\ensuremath{\logicClFont{Sol}}}
\newcommand{\F}{{\protect\ensuremath{\complClFont{F}}}}
\newcommand{\TF}{{\protect\ensuremath{\complClFont{TF}}}}
\newcommand{\co}{{\protect\ensuremath{\complClFont{co}}}}
\title{Enumeration in Incremental FPT-Time}
\author{Arne Meier\footnote{Funded by the German Research Foundation DFG, project ME 4279/1-2.}}
\date{\small
Institut für Theoretische Informatik, Leibniz Universität Hannover,\\ Appelstrasse 4, 30167 Hannover, Germany\\ \texttt{meier@thi.uni-hannover.de} 
}
\begin{document}

\maketitle

\begin{abstract}
	In this paper, we study the relationship of parametrised enumeration complexity classes defined by Creignou et~al. (MFCS 2013).
	Specifically, we introduce two hierarchies (IncFPTa and CapIncFPTa) of enumeration complexity classes for incremental fpt-time in terms of exponent slices and show how they interleave.
	Furthermore, we define several parametrised function classes and, in particular, introduce the parametrised counterpart  of the class of nondeterministic multivalued functions with values that are polynomially verifiable and guaranteed to exist, TFNP, known from Megiddo and Papadimitriou~(TCS 1991).
	We show that TF(para-NP) collapsing to F(FPT) is equivalent to OutputFPT coinciding with IncFPT.
	This result is in turn connected to a collapse in the classical function setting and eventually to the collapse of IncP and OutputP which proves the first direct connection of classical to parametrised enumeration.	
\end{abstract}

\section{Introduction}
\paragraph{Enumeration.}
In 1988, Johnson, Papadimitriou and Yannakakis \cite{JohnsonPY88} introduced the framework of enumeration algorithms.
In modern times of ubiquitous computing, such algorithms are of central importance in several areas of life and research such as combinatorics, computational geometry, and operations research \cite{af96}.
Also, recent results unveil major importance in web search, data mining, bioinformatics, and computational linguistics \cite{CreignouOS11}.
Moreover, there exist connections to formal languages on enumeration problems for probabilistic automata \cite{DBLP:journals/mst/Strozecki13}.

Clearly, for enumeration algorithms the runtime complexity is rather peripheral and the time elapsed between two outputs is of utmost interest.
As a result, one measures the \emph{delay} of such algorithms and tries to achieve a uniform stream of printed solutions.
In this context, the complexity class $\delayP$, that is polynomial delay, is regarded as an efficient way of enumeration.
Interestingly, there exists a class of \emph{incremental polynomial delay}, $\incP$, which contains problems that allow for enumeration algorithms whose delay increases in the process of computation.
Intuitively, this captures the idea that after printing `obvious' solutions, later in the process it becomes difficult to find new outputs.
More precisely, the delay between output $i$ and $i+1$ is bounded by a polynomial of the input length and of $i$.
Consequently, in the beginning, such an algorithm possesses a polynomial delay whereas later it eventually becomes exponential (for problems with exponential many solutions; which is rather a common phenomenon).
While prominent problems in the class $\delayP$ are the enumeration of satisfying assignments for Horn or Krom formulas \cite{CreignouH97}, structures for first-order query problems with possibly free second-order variables and at most one existential quantifier \cite{DBLP:conf/csl/DurandS11}, or cycles in graphs \cite{rt75}, rather a limited amount of research has been invested in understanding $\incP$.
A well-studied problem in this enumeration complexity class is the task of generating all maximal solutions of systems of equations modulo~2 \cite{DBLP:journals/siamdm/KhachiyanBEGM05}.
Even today, it is not clear whether this problem can be solved with a polynomial delay. 
Other examples for problems in $\incP$ are given by Eiter et~al.~\cite{DBLP:journals/siamcomp/EiterGM03}, or Fredman and Khachiyan \cite{DBLP:journals/jal/FredmanK96}.
Recently, Capelli and Strozecki \cite{cs17} deeply investigate $\incP$ and its relationship to other classical enumeration classes, improving the overall understanding of this class.

\paragraph{Parametrised Complexity.} The framework of parametrised complexity \cite{DowneyFellows13,DowneyFellows99,FlumGrohe06,nie06} allows one to approach a fine-grained complexity analysis of problems beyond classical worst-case complexity.
Here, one considers a problem together with a parameter and tries to achieve deterministic runtimes of the form $f(\kappa(x))\cdot p(|x|)$, where $\kappa(x)$ is the value of the parameter of an instance $x$, $p$ is a polynomial, and $f$ is an arbitrary computable function.
The mentioned runtime is eponymous for the class $\FPT$.
As usually a parameter is seen do be slowly growing or even of constant value \cite{DBLP:journals/jal/AlberFN04}, accordingly, one calls such problems \emph{fixed-parameter tractable}.

A rather large parametrised complexity class is $\para\NP$, the nondeterministic counterpart of $\FPT$, which is defined via nondeterministic runtimes of the same form $f(\kappa(x))\cdot p(|x|)$.
Clearly, $\FPT\subseteq\para\NP$ is true, but essentially $\para\NP$ is widely \emph{not} seen as a correspondent of $\NP$ on the classical complexity side.
In fact, $\Wone$ is the class which usually is used to show intractability lower bounds in the parametrised setting.
This class is part of an infinite $\mathsf{W}$-hierarchy in between the aforementioned two classes.
It is not known whether any of the inclusions of the intermediate classes is strict or not.

\paragraph{Parametrised Enumeration.} Recently, Creignou et~al.\ \cite{cmmsv13,cmmsv17,ckmmov15} developed a framework of parametrised enumeration allowing for fine-grained complexity analyses of enumeration problems.
In analogue to classical enumeration complexity, there are the classes $\delayFPT$ and $\incFPT$, and, here as well, it is unknown if $\delayFPT\subsetneq\incFPT$ is true or not.

In their research, Creignou et~al.~\cite{ckmmov15} noticed that for some problems, enumerating solutions by increasing size is possible with $\delayFPT$ and exponential space (such as triangulations of graphs, or cluster editings). 
However, it is not clear how to circumvent the unsatisfactory space requirement.
Recently, Meier and Reinbold \cite{DBLP:conf/foiks/0001MR18} observed a similar phenomenon. 
They study the enumeration complexity of problems in a modern family of logic of dependence and independence.
In the context of dependence statements, single assignments do not make sense.
As a result, one introduces \emph{team semantics} which defines semantics with respect to sets of assignments, which are commonly called \emph{teams}.
Meier and Reinbold showed that in the process of enumerating satisfying teams for formulas of a specific \emph{Dependence Logic} fragment, it seemed that an $\FPT$ delay required exponential space.
While reaching polynomial space for the same problem, the price was paid by an increasing delay, $\incFPT$, and it was not clear how to avoid the increase of the delay while maintaining polynomial space.
This is a significant question of research and we improve the understanding of this question by pointing out connections to classical enumeration complexity where similar phenomena have been observed \cite{cs17}.

\paragraph{Related work.} In 1991, Megiddo and Papadimitriou \cite{mp91} introduced the function complexity class $\TF(\NP)$ and studied problems within this class.
In a recent investigation, Goldberg and Papadimitriou introduced a rich theory around this complexity class that features also several aspects of proof theory \cite{gp17}.
Also, the investigations of Capelli and Strozecki \cite{cs17} on probabilistic classes might yield further connections to the enumeration setting via the parametrised analogues of probabilistic computation of Chauhan and Rao \cite{DBLP:conf/caldam/ChauhanR15}.
Furthermore, Fichte et~al.~\cite{DBLP:conf/lata/FichteHS18} study the parametrised complexity of default logic and present in their work a parametrised enumeration algorithm outputting stable extensions. 
It might be worth to further analyse problems in this setting possibly yielding $\incFPT$ algorithms.
Quite recently, Bläsius et~al.~\cite{bfms18} consider the enumeration of minimal hitting sets in lexicographical order and devise some cases which allow for $\delayP$-, resp., $\delayFPT$-algorithms.
Furthermore, there exists a work in which enumeration complexity results have been made for problems on MSO formulas \cite{DBLP:conf/lpar/PichlerRW10}.
Finally, investigations of Mary and Strozecki \cite{ms16} are related to the $\incP$-versus-$\delayP$ question from the perspective of closure operations.

\paragraph{Contribution.}
We improve the understanding of incremental enumeration time by connecting classical enumeration complexity to the very young field of parametrised enumeration complexity.
Although we cannot answer the aforementioned time-space-tradeoff question in either positive or negative way, the presented ``bridge'' to parametrised enumeration will be helpful for future research.
Capelli and Strozecki \cite{cs17} distinguish two kinds of incremental polynomial time enumeration, which we later will call $\incP$ and $\capincP$.
Essentially, the difference of these two classes lies in the perspective of the delay.
For $\incP$ one measures the delay between an output solution $i$ and $i+1$ which has to be polynomial in $i$ and the input length. 
For $\capincP$ the output of $i$ solutions has to be polynomial in $i$ and the input length.
In Section~\ref{sec:connect}, we will introduce several parametrised function classes that are utilised to prove our main result: $\incFPT=\outputFPT$ if and only if $\incP=\outputP$.
This is the first result that directly connects the classical with the parametrised enumeration setting.
By this approach, separating the classical classes then implies separating the parametrised counterparts and \emph{vice versa}.
Moreover, we introduce two hierarchies of parametrised incremental time $\incFPT_a$ and $\capincFPT_a$ in Section~\ref{sec:hierarchy}, show that they interleave and thereby provide some new insights into the interplay of $\FPT$ delay and incremental $\FPT$ delay.
One of the previously mentioned parametrised function classes is a counterpart of the class $\TF(\NP)$, the class of nondeterministic multivalued functions with values that are polynomially verifiable and guaranteed to exist, known from Megiddo and Papadimitriou \cite{mp91}.
This class summarises significant cryptography related problems such as factoring or the discrete logarithm modulo a (certified) prime $p$ of a (certified) primitive root $x$ of $p$.
Clearly, parametrised versions of these problems are members in $\TF(\para\NP)$ via the trivial parametrisation $\kappa_{\text{one}}(x)=1$.

\paragraph{Outline.} In Section~\ref{sec:prelims}, we introduce the necessary notions of parametrised complexity theory and enumeration. 
Then, we continue in Section~\ref{sec:hierarchy} to present two hierarchies of parametrised incremental $\FPT$ enumeration classes and study the relation to $\delayFPT$.
Eventually, in Section~\ref{sec:connect}, we introduce several parametrised function classes and outline connections to the parametrised enumeration classes.
There, we connect a collapse of the two function classes $\TF(\para\NP)$ and $\F(\FPT)$ to a collapse of $\outputFPT$ and $\capincFPT$, extend this collapse to $\TF(\NP)$ and $\FP$ (so in the classical function complexity setting), and further reach out for our main result showing $\outputFPT=\incFPT$ if and only if $\outputP=\incP$.
Finally, we conclude and present questions for future research.
\iflong\else Due to space restrictions, some proofs have to be omitted, but are included in the related version (Arne Meier: \emph{Enumeration in Incremental FPT-Time}, CoRR:abs/1804.07799, 2018).\fi

\section{Preliminaries}\label{sec:prelims}
Enumeration algorithms are usually running in exponential time as the solution space is of this particular size.
As Turing machines cannot access specific bits of exponentially sized data in polynomial time, one commonly uses the RAM model as the machinery of choice; see, for instance, the work of Johnson et~al.\ \cite{JohnsonPY88}, or more recently, of Creignou et~al.\ \cite{ckpsv17}.
For our purposes, polynomially restricted RAMs (RAMs where each register content is polynomially bounded with respect to the input size) suffice.
We will make use of the standard complexity classes $\P$ and $\NP$.

\subsection{Parametrised Complexity Theory}
We will present a brief introduction into the field of parametrised complexity theory.
For a deeper introduction we kindly refer the reader to the textbook of Flum and Grohe \cite{FlumGrohe06}.

Let $Q\subseteq\Sigma^*$ be a decision problem over some alphabet $\Sigma$. 
Given an instance $\langle x,k\rangle\in Q\times\Sigma^*$, we call $k$ the \emph{parameter's value (of $x$)}.
Often, instead of using tuple notation for instances, one uses a polynomial time computable function $\kappa\colon\Sigma^*\to\Sigma^*$ (the \emph{parametrisation}) to address the parameter's value of an input $x$.
Then, we write $(Q,\kappa)$ denoting the \emph{parametrised problem} (PP).
Often, the codomain of parametrisations is the natural numbers.

\begin{definition}[Fixed-parameter tractable]
	Let $(Q,\kappa)$ be a parametrised problem over some alphabet $\Sigma$. 
	If there exists a deterministic algorithm $A$ and a computable function $f\colon\N\to\N$ such that for all $x\in\Sigma^*$
	\begin{itemize}
		\item $A$ accepts $x$ if and only if $x\in Q$, and
		\item $A$ has a runtime of $O(f(\kappa(x))\cdot|x|^{O(1)})$,
	\end{itemize}
	then $A$ is an \emph{fpt-algorithm for $(Q,\kappa)$} and $(Q,\kappa)$ is \emph{fixed-parameter tractable} (or short, in the complexity class $\FPT$).
\end{definition}


Flum and Grohe \cite{FlumGrohe06} provide a way to ``parametrise'' a classical and \emph{robust} complexity class.
For our purposes $\para\NP$ suffice and accordingly we do not present the general scheme.
\begin{definition}[$\para\NP$, {\cite[Def.~2.10]{FlumGrohe06}}]
	Let $(Q,\kappa)$ with $Q\subseteq\Sigma^*$ be a parametrised problem over some alphabet $\Sigma$. 
	We have $(Q,\kappa)\in\para\NP$ if there exists a computable function $f\colon\N\to\N$ and a nondeterministic algorithm $N$ such that for all $x\in\Sigma^*$, $N$ correctly decides whether $x\in Q$ in at most $f(\kappa(x))\cdot p(|x|)$ steps, where $p$ is a polynomial.
\end{definition}

Furthermore, Flum and Grohe characterise the class $\para\NP$ via all problems ``\emph{that are in $\NP$ after precomputation on the parameter}''.
\begin{proposition}[Prop.\ 2.12 in {\cite{FlumGrohe06}}]\label{prop:precomputation}
	Let $(Q,\kappa)$ be a parametrised problem over some alphabet $\Sigma$. 
	We have $(Q,\kappa)\in\para\NP$ if there exists a computable function $\pi\colon\Sigma^*\to\Sigma^*$ and a problem $Q'\subseteq\Sigma^*\times\Sigma^*$ such that $Q'\in\NP$ and the following is true: for all instances $x\in\Sigma^*$ we have that $x\in Q$ if and only if $(x,\pi(\kappa(x)))\in Q'$.
\end{proposition}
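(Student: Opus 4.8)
The plan is to build a nondeterministic $\para\NP$-machine for $(Q,\kappa)$ directly from the assumed data. Since $Q'\in\NP$, fix a deterministic polynomial-time verifier $V$ and a polynomial $s$ such that $(y,z)\in Q'$ exactly when $V(y,z,w)=1$ for some $w$ with $|w|\le s(|y|+|z|)$. Let $N$ be the nondeterministic algorithm that, on input $x$, (i)~deterministically computes $k\defeq\kappa(x)$, (ii)~deterministically computes $z\defeq\pi(k)$, (iii)~guesses a string $w$ with $|w|\le s(|x|+|z|)$, and then accepts iff $V(x,z,w)=1$. Correctness is immediate from the hypothesis: some branch of $N$ accepts $x$ iff $(x,\pi(\kappa(x)))\in Q'$ iff $x\in Q$, so $N$ correctly decides membership in $Q$.

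Next I would bound the running time. Step~(i) takes at most $q(|x|)$ steps for some polynomial $q$, since a parametrisation is polynomial-time computable; in particular $|k|\le q(|x|)$. For step~(ii), let $g\colon\N\to\N$ be the computable function returning the number of steps used by the fixed algorithm computing $\pi$ on input $k$; then step~(ii) uses at most $g(k)$ steps and outputs $z$ with $|z|\le g(k)$. Step~(iii) guesses $w$ of length at most $s(|x|+|z|)$ and runs $V$ on an input of size $|x|+|z|+O(1)$, so it costs at most $r(|x|+g(k))$ steps for a suitable polynomial $r$. The total is therefore $q(|x|)+g(k)+r(|x|+g(k))$.

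It then remains to force this into the shape $f(\kappa(x))\cdot p(|x|)$. Each of the three summands is at most a computable function of $k$ times a polynomial in $|x|$: trivially for the first two, and for the third, if $d$ is the degree of $r$ then $(a+b)^{d}\le 2^{d}(a^{d}+b^{d})$ yields $r(|x|+g(k))\le c\,(|x|^{d}+g(k)^{d})\le c\,g(k)^{d}\,(|x|^{d}+1)$ for a constant $c$ determined by $r$, assuming harmlessly that $g(k)\ge 1$. Regrouping, the running time of $N$ is at most $\bigl(1+g(k)+c\,g(k)^{d}\bigr)\cdot\bigl(q(|x|)+|x|^{d}+1\bigr)$, so taking $f(k)\defeq 1+g(k)+c\,g(k)^{d}$ (computable, as $g$ is) and $p(n)\defeq q(n)+n^{d}+1$ (a polynomial) gives $(Q,\kappa)\in\para\NP$. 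The one genuinely non-mechanical point is that $z=\pi(\kappa(x))$ may be far larger than any polynomial in $|x|$ — its size is bounded only by $g$ applied to the parameter — so the naive ``polynomial in $|x|+|z|$'' bound for step~(iii) is not yet in parametrised form; splitting off the $g(k)$-contribution and absorbing it into $f$, as above, is precisely what repairs this. (The converse, which turns Proposition~\ref{prop:precomputation} into an equivalence, follows by taking $\pi(k)=1^{f(k)}$ for a $\para\NP$ runtime bound $f$ and letting $Q'$ be the associated time-bounded acceptance problem; only the stated direction is used later.)
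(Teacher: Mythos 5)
The paper does not actually prove this proposition --- it is imported verbatim from Flum and Grohe (their Prop.~2.12) --- so there is no in-paper argument to compare against. Your proof is correct and is precisely the standard precomputation-on-the-parameter argument: the only genuinely non-mechanical point, namely that the contribution of $z=\pi(\kappa(x))$ to the runtime is bounded only by a computable function of the parameter and must therefore be split off and absorbed into the factor $f$, is handled correctly via the $(a+b)^{d}\le 2^{d}(a^{d}+b^{d})$ regrouping.
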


According to the well-known characterisation of the complexity class $\NP$ via a verifier language, one can easily deduce the following corollary which later is utilised to explain Definition~\ref{def:ffptfparanp}.
\begin{corollary}\label{cor:paraNP-verifier}
	Let $(Q,\kappa)$ be a parametrised problem over some alphabet $\Sigma$ and $p$ some polynomial.
	We have $(Q,\kappa)\in\para\NP$ if there exists a computable function $\pi\colon\Sigma^*\to\Sigma^*$ and a problem $Q'\subseteq\Sigma^*\times\Sigma^*\times\Sigma^*$ such that $Q'\in\PTime$ and the following is true: for all instances $x\in\Sigma^*$ we have that $x\in Q$ if and only if there exists a $y$ such that $|y|\leq p(|x|)$ and $(x,\pi(\kappa(x)),y)\in Q'$.
\end{corollary}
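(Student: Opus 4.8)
The plan is to reduce the statement directly to Proposition~\ref{prop:precomputation} by absorbing the bounded existential quantifier over $y$ into a fresh $\NP$-problem. Concretely, I would first define
\[
Q'' \defeq \setprop{(x,z)\in\Sigma^*\times\Sigma^*}{\exists y\in\Sigma^*\colon |y|\leq p(|x|)\text{ and }(x,z,y)\in Q'}.
\]
The key observation is that $Q''\in\NP$: on input $(x,z)$ a nondeterministic machine guesses a string $y$ of length at most $p(|x|)$ and then accepts if and only if $(x,z,y)\in Q'$, which by hypothesis ($Q'\in\PTime$) can be decided deterministically in polynomial time. This is exactly the standard guess-and-check characterisation of $\NP$.

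Next I would check that the precomputation function $\pi$ from the hypothesis witnesses membership in $\para\NP$ via $Q''$. By assumption, for every $x\in\Sigma^*$ we have $x\in Q$ if and only if there exists $y$ with $|y|\leq p(|x|)$ and $(x,\pi(\kappa(x)),y)\in Q'$; unfolding the definition of $Q''$, this says precisely that $x\in Q$ if and only if $(x,\pi(\kappa(x)))\in Q''$. Since $\pi$ is computable and $Q''\in\NP$, Proposition~\ref{prop:precomputation} immediately gives $(Q,\kappa)\in\para\NP$.

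There is essentially no obstacle here; the only point requiring a moment's care is the bookkeeping on the polynomial bounds — one must note that $p(|x|)$, although phrased in terms of $|x|$ rather than the full input length $|(x,z)|$ of the verifier for $Q''$, is still a polynomial bound in that input length, so the guessed certificate $y$ is of polynomial size and the verification runs in nondeterministic polynomial time. With that remark in place, the corollary follows from the $\NP$-verifier characterisation together with Proposition~\ref{prop:precomputation}.
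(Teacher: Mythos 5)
Your argument is correct and is exactly the route the paper intends: the corollary is stated without an explicit proof, being justified only by the remark that it follows from the verifier characterisation of $\NP$ together with Proposition~\ref{prop:precomputation}, and your construction of $Q''$ by absorbing the bounded existential quantifier is precisely that deduction spelled out. Nothing further is needed.
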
	

\subsection{Enumeration}
As already motivated in the beginning of this section, measuring the runtime of enumeration algorithms is usually abandoned.
As a result, one inspects the uniformity of the flow of output solutions of these algorithms rather than their total running time.
In view of this, one measures the delay between two consecutive outputs.
Johnson et~al.\ \cite{JohnsonPY88} laid the cornerstone of this intuition in a seminal paper and introduced the necessary tools and complexity notions.
Creignou, Olive, and Schmidt \cite{CreignouOS11,Schmidt09} present recent notions in this framework, which we aim to follow.
In this paper, we only consider enumeration problems which are ``good'' in the sense of polynomially bounded solution lengths and polynomially verifiable solutions (Capelli and Strozecki \cite{cs17} call the corresponding class EnumP).

\begin{definition}[Enumeration problem]\label{def:enumprob}
	An \emph{enumeration problem} (EP) over an alphabet $\Sigma$ is a tuple $E=(Q,\Sol)$, where 
	\begin{enumerate}
		\item $Q\subseteq\Sigma^*$ is the set of instances (recognisable in polynomial time),
		\item $\Sol\colon\Sigma^*\rightarrow \mathcal{P} (\Sigma^*)$ is a computable function such that for all $x\in\Sigma^*$, $\Sol(x)$ is a finite set and $\Sol(x)\ne\emptyset$ if and only if $x\in Q$,
		\item $\left\{(x,y)\mid y\in\Sol(x)\right\}\in\P$, and
		\item there exists a polynomial $p$ such that for all $x\in Q$ and $y\in\Sol(x)$ we have $|y|\leq p(|x|)$.
	\end{enumerate}
\end{definition}
Furthermore, we use the shorthand $\AllSol=\bigcup_{x\in I}\Sol(x)$ to refer to the set of solutions for every possible instance.
If $E=(Q, \Sol)$ is an EP over the alphabet $\Sigma$, then we call strings $x\in\Sigma^*$ \emph{instances of $E$}, and $\Sol(x)$ the \emph{set of solutions of $x$}. 

An \emph{enumeration algorithm} $\mathcal{A}$ for the enumeration problem $E=(Q, \Sol)$ is a deterministic algorithm which, on the input $x$ of $E$, outputs exactly the elements of $\Sol(x)$ without duplicates, and which terminates after a finite number of steps on every input.	

The following definition fixes the ideas of measuring the `flow of output solutions'.
\begin{definition}[Delay]
 Let $E=(Q,\Sol)$ be an enumeration problem and $\mathcal A$ be an enumeration algorithm for $E$.
 For $x\in Q$ we define the $i$-th delay of $\mathcal A$ as the time between outputting the $i$-th and $(i+1)$-st solution in $\Sol(x)$. 
 Furthermore, we set the $0$-th delay to be the \emph{precomputation phase} which is the time from the start of the computation to the first output statement. 
 Analogously, the $n$-th delay, for $n=|\Sol(x)|$, is the \emph{postcomputation phase} which is the time needed after the last output statement until $\mathcal A$ terminates.
\end{definition}

Subsequently, we will use the notion of delay to state the central enumeration complexity classes.
\begin{definition}
	Let $E=(Q,\Sol)$ be an enumeration problem and $\calA$ be an enumeration algorithm for~$E$. 
	Then $\calA$ is
\begin{enumerate}
	\item an $\enumP$-algorithm if and only if there exists a polynomial $p$ such that for all $x\in Q$, algorithm $\calA$ outputs $\Sol(x)$ in time $O(p(|x|))$.
	\item a $\delayP$-algorithm if and only if there exists a polynomial $p$ such that for all $x\in Q$, algorithm $\calA$ outputs $\Sol(x)$ with delay $O(p(|x|))$.
	\item an $\incP$-algorithm if and only if there exists a polynomial $p$ such that for all $x\in Q$, algorithm $\calA$ outputs $\Sol(x)$ and its $i$-th delay is in $O(p(|x|,i))$ (for every $0\leq i\leq|\Sol(x)|$).
	\item a $\capincP_a$-algorithm if and only if there exists a polynomial $p$ such that for all $x\in Q$, algorithm $\calA$ outputs $i$ elements of $\Sol(x)$ in time $O(p(|x|,i^a))$ (for every $0\leq i\leq|\Sol(x)|$.
	\item a $\totalP$-algorithm if and only if there exists a polynomial $p$ such that for all $x\in Q$, algorithm $\calA$ outputs $\Sol(x)$ in time $O(p(|x|,|\Sol(x)|))$.
\end{enumerate}
Accordingly, we say $E$ is in $\enumP$/$\delayP$/$\incP$/$\capincP_a$/$\totalP$ if $E$ admits an $\enumP$-/$\delayP$-/$\incP$-/$\capincP_a$-/$\totalP$-algorithm.
Finally, we define $\capincP:=\bigcup_{a\in\N}\capincP_a$.
\end{definition}

Note that in the diploma thesis of Schmidt~\cite[Sect.~3.1]{Schmidt09} the class $\enumP$ is called $\complClFont{TotalP}$. 
We avoid this name to prevent possible confusion with class names defined in the following section as well as with the work of Capelli and Strozecki~\cite{cs17}.
Also, we want to point out that Capelli and Strozecki use the definition of $\capincP$ for $\incP$ (and use the name ``\textsf{UsualIncP}'' for $\incP$ instead).  
They prove that the notions of $\capincP$ and $\incP$ are equivalent up to an exponential space requirement when using a structured delay.
So generally, without any space restrictions, the following result applies.
\begin{proposition}[Corollary~13 in \cite{cs17}]\label{prop:capincp=incp}
	$\capincP=\incP$.
\end{proposition}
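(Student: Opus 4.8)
The plan is to prove the two inclusions $\incP\subseteq\capincP$ and $\capincP\subseteq\incP$ by direct simulations; both reduce to a time‑accounting argument, and neither costs extra space, which is exactly why the equivalence holds unconditionally once space is disregarded. Throughout I may assume, after replacing $p$ by a non‑decreasing polynomial upper bound, that all polynomials involved are non‑decreasing in both arguments. For $\incP\subseteq\capincP$, let $\calA$ be an $\incP$‑algorithm for an enumeration problem $E=(Q,\Sol)$ whose $i$‑th delay is at most $c\cdot p(|x|,i)$ for all $x\in Q$ and $0\le i\le|\Sol(x)|$; run $\calA$ unchanged. The time at which the $i$‑th solution is emitted equals the sum of the $0$‑th through $(i-1)$‑st delays, hence is at most $\sum_{j=0}^{i-1} c\cdot p(|x|,j)\le i\cdot c\cdot p(|x|,i)$, which is bounded by a polynomial $q(|x|,i)$. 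Since $i\le i^1$, this exhibits $\calA$ as a $\capincP_1$‑algorithm, so $E\in\capincP_1\subseteq\capincP$.

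For $\capincP\subseteq\incP$, let $\calA$ be a $\capincP_a$‑algorithm for $E$, so that for some constant $c$ and polynomial $p$, for all $x\in Q$ and $0\le i\le|\Sol(x)|$ the first $i$ solutions have all been printed within $c\cdot p(|x|,i^a)$ steps. I would build $\calA'$ that simulates $\calA$ while maintaining a global step counter and the number $i$ of solutions printed so far. After its $i$‑th output, $\calA'$ continues the simulation until either \emph{(a)} $\calA$ prints a fresh solution, which $\calA'$ relays while incrementing $i$, or \emph{(b)} the step counter reaches the deadline $c\cdot p(|x|,(i+1)^a)$, whereupon $\calA'$ halts. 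Correctness: in case \emph{(b)} an $(i+1)$‑st solution, if it existed, would already have been printed by step $c\cdot p(|x|,(i+1)^a)$, so none exists, i.e.\ $i=|\Sol(x)|$; hence $\calA'$ outputs exactly $\Sol(x)$, without duplicates, and terminates. For the delay: between the $i$‑th and $(i+1)$‑st output of $\calA'$ --- and equally for the pre‑computation phase ($i=0$) and the post‑computation phase ($i=|\Sol(x)|$) --- at most $c\cdot p(|x|,(i+1)^a)$ steps of $\calA$ elapse, so, absorbing the polynomial overhead of simulating one RAM on another, the $i$‑th delay of $\calA'$ is bounded by a polynomial in $|x|$ and $i$. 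Thus $\calA'$ is an $\incP$‑algorithm and $E\in\incP$. Recomputing the deadline once per solution costs $|x|^{O(1)}$ time --- note that $i$, though possibly exponential, is stored in $|x|^{O(1)}$ bits --- and the two counters need only polynomial space, so space is not blown up.

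I expect the only delicate point to be the post‑computation phase of the second inclusion: a raw $\capincP_a$‑algorithm may idle for an arbitrarily long (super‑polynomial‑in‑$|\Sol(x)|$) time after emitting its final solution, so it cannot serve verbatim as an $\incP$‑algorithm. The clock‑and‑abort device above is precisely what closes this gap, using that the $\capincP_a$ guarantee supplies a concrete deadline by which the next solution --- if any --- must have been produced. Nothing analogous threatens the first inclusion, since a per‑step delay bounded by $c\cdot p(|x|,i)$ automatically bounds the running time of every prefix.
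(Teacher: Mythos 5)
Your proof is correct. Note, however, that the paper does not prove this proposition at all: it is imported verbatim as Corollary~13 of Capelli and Strozecki, so the natural point of comparison is the paper's own proof of the parametrised analogue ($\capincFPT_{a+1}=\incFPT_a$, whence $\capincFPT=\incFPT$). There the delicate inclusion is handled by buffering solutions in a priority queue and releasing one each time a step counter crosses a threshold, which regularises the delay at the cost of exponential space --- exactly the trade-off the surrounding text emphasises. Your argument for $\capincP\subseteq\incP$ takes a genuinely lighter route: you relay outputs the moment the simulated algorithm produces them and use the total-time guarantee only as a termination clock, observing that the raw gap between the $i$-th and $(i+1)$-st outputs is already bounded by the total time to produce $i+1$ solutions, hence polynomial in $|x|$ and $i$. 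This buys you the union-level equivalence with only polynomial space overhead and correctly isolates the one real obstruction (the unbounded post-computation phase of a raw $\capincP_a$-algorithm). What it does not buy is the finer, slice-wise interleaving: your delays are of order $p(|x|,(i+1)^a)$ rather than $i^{a-1}\cdot p(|x|)$, so the exponent accounting behind the hierarchy statement --- the actual content of the priority-queue construction --- is lost. Your first inclusion matches the standard summation argument. The only implicit assumption worth making explicit is that the simulator is built knowing the constant and polynomial of the time bound of $\calA$, which is the usual non-uniform reading of membership in these classes.
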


\subsection{Parametrised Enumeration}
After we introduced the basic principles in parametrised complexity theory and enumeration complexity theory, we will introduce a combined version of these previously introduced notions.

\begin{definition}[\cite{cmmsv17}]\label{def:para-enum-pb}
	A \emph{parametrised enumeration problem} (PEP) over an alphabet $\Sigma$ is a triple $E=(Q,\kappa, \Sol)$ where
	\begin{itemize}
		\item $\kappa\colon \Sigma^*\rightarrow \N$ is a parametrisation (that is, a polynomial-time computable function), and
		\item $(Q, \Sol)$ is an EP.
	\end{itemize}
\end{definition}
Besides, the definitions of enumeration algorithms and delays are easily lifted to the setting of PEPs.

Observe that the following defined classes are in complete analogy to the non-parametrised case from the previous section.

\begin{definition}[\cite{cmmsv17}]\label{def:enum-algs}
 Let $E=(Q, \kappa, \Sol)$ be a PEP and $\mathcal{A}$ an enumeration algorithm for $E$.
 Then the algorithm $\mathcal{A}$ is
\begin{enumerate}
 \item \label{def:FPT-enumerable}
 an $\FPT$-enumeration algorithm if there exist a computable function $t\colon \N\rightarrow \N$ and a polynomial $p$  such that for every instance $x\in\Sigma^*$, $\mathcal{A}$ outputs $\Sol(x)$ in time at most $t(\kappa(x))\cdot p(|x|)$,
 \item a $\delayFPT$-algorithm if there exist a computable function $t\colon \N\rightarrow \N$ and a polynomial $p$ such that for every $x\in\Sigma^*$, $\mathcal{A}$ outputs $\Sol(x)$ with delay of at most $t(\kappa(x))\cdot p(|x|)$,
 \item an $\incFPT$-algorithm if there exist a computable function $t\colon \N\rightarrow \N$ and a polynomial $p$ such that for every $x\in\Sigma^*$, $\mathcal{A}$ outputs $\Sol(x)$ and its $i$-th delay is at most $t(\kappa(x))\cdot p(|x|,i)$, and
 \item an $\outputFPT$-algorithm if there exist a computable function $t\colon \N\rightarrow \N$ and a polynomial $p$  such that for every instance $x\in\Sigma^*$, $\mathcal{A}$ outputs $\Sol(x)$ in time at most $t(\kappa(x))\cdot p(|x|,|\Sol(x)|)$.
\end{enumerate}
\end{definition}

Note that as before, the notion of $\totalFPT$ has been used for the class of $\FPT$-enumerable problems \cite{cmmsv13}.
We avoid this name as it causes confusion with respect to an enumeration class $\complClFont{TotalP}$ \cite[Sect.~3.1]{Schmidt09} which takes into account not only the size of the input but also the number of solutions.
We call this class $\totalP$ instead and accordingly it is the non-parametrised analogue of the above class $\outputFPT$.
Now we group these different kinds of algorithms in complexity classes.

The class $\FPTenum$/$\delayFPT$/$\incFPT$/$\outputFPT$ is the class of all PEPs that admit a $\FPT$-enumeration/$\delayFPT$-/$\incFPT$-/$\outputFPT$-algorithm.  

The class $\delayFPT$ captures a good notion of tractability for parametrised enumeration complexity.
Creignou et~al.\ \cite{cmmsv17} identified a multitude of problems that admit $\delayFPT$ enumeration.
Note that, due to Flum and Grohe~\cite[Prop.~1.34]{FlumGrohe06} the class $\FPT$ can be characterised via runtimes of the form either $f(\kappa(x))\cdot p(|x|)$ or $f(\kappa(x))+ p(|x|)$ (as $a\cdot b\leq a^2+b^2$, for all $a,b\in\N$).
Accordingly, this applies also to the introduced classes $\delayFPT$, $\incFPT$, and $\outputFPT$.

\section{Interleaving Hierarchies of Parametrised Incremental Time}\label{sec:hierarchy}

The previous observations raise the question on how $\delayFPT$ relates to $\incFPT$. 
In the classical enumeration world, this question is answered by Capelli and Strozecki \cite[Prop.~16]{cs17} for the capped version of incremental polynomial time: $\delayP\subsetneq\capincP$ is true, only for linear total time and polynomial space it has not been answered yet.
This is the question how $\delayP$ with polynomial space relates to $\capincP_1$ with polynomial space \cite[Open Problem~1, p.10]{cs17}.
In the course of this chapter, we will realise that the relationship between the classical and the parametrised world is very close.
Capelli and Strozecki approach the separation mentioned above through the classes $\capincP_a$ and prove a strict hierarchy of these classes.
We lift this to the parametrised setting.
\begin{definition}[Sliced Versions of Incremental FPT, extending Def.~\ref{def:enum-algs}]\label{def:slices}
\begin{enumerate}
	\iflong\else\item[]\fi
	\item[3'.] an $\capincFPT_a$-algorithm (for $a\in\N$) if there exists a computable function $t\colon\N\to\N$ and a polynomial $p$ such that for every $x\in\Sigma^*$, $\calA$ outputs $i$ elements of $\Sol(x)$ in time $t(\kappa(x))\cdot i^a\cdot p(|x|)$ (for every $0\leq i \leq |\Sol(x)|$).
	\item[3''.] an $\incFPT_a$-algorithm (for $a\in\N$) if there exists a computable function $t\colon\N\to\N$ and a polynomial $p$ such that for every $x\in\Sigma^*$, $\calA$ outputs $\Sol(x)$ and its $i$-th delay is at most $t(\kappa(x))\cdot i^a\cdot p(|x|)$.
\end{enumerate}
Similarly, we define a hierarchy of classes $\capincFPT_a$ for every $a\in\N$ which consist of problems that admit an $\capincFPT_a$-algorithm.
Moreover, $\capincFPT:=\bigcup_{a\in\N}\capincFPT_a$.
\end{definition}

Clearly, $\bigcup_{a\in\N}\incFPT_a=\incFPT$ and $\incFPT_0=\delayFPT$ by Definition~\ref{def:enum-algs} as the $i$-th delay then merely is $t(\kappa(x))\cdot p(|x|)$, as $i^0=1$.

Agreeing with Capelli and Strozecki \cite[Sect.~3]{cs17}, it seems very reasonable to see the difference of $\incFPT_1$ and $\delayFPT$ anchored in the exponential sized priority queue.
The price of a ``regular'' (that is, polynomial) delay is paid by requiring exponential space.
Though, relaxing this statement shows that the equivalence of incremental $\FPT$ delay and capped incremental $\FPT$-time is also true in the parametrised world.
Similarly, as in the classical setting \cite[Prop.~12]{cs17}, the price of a structured delay is the required exponential space of a priority queue.
\begin{theorem}
	For every $a\ge0$, $\capincFPT_{a+1}=\incFPT_a$.
\end{theorem}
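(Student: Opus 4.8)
The plan is to prove the two inclusions separately; the argument mirrors the classical equivalence of incremental and capped incremental polynomial time of Capelli and Strozecki \cite[Prop.~12]{cs17}, with the parameter bound $t(\kappa(x))$ simply carried along. The inclusion $\incFPT_a\subseteq\capincFPT_{a+1}$ is the routine direction: given an $\incFPT_a$-algorithm $\calA$ for a PEP $E=(Q,\kappa,\Sol)$ with bounding function $t$ and polynomial $p$, the time at which $\calA$ has printed its first $i$ solutions (for $x\in Q$, $1\le i\le|\Sol(x)|$) is the sum of the first $i$ delays, which is at most $\sum_{l=0}^{i-1}t(\kappa(x))\cdot l^a\cdot p(|x|)\le t(\kappa(x))\cdot i^{a+1}\cdot p(|x|)$ up to a constant factor absorbed into $t$; hence $\calA$ is also a $\capincFPT_{a+1}$-algorithm. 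In particular, for $a=0$ this recovers $\delayFPT\subseteq\capincFPT_1$.

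For the converse $\capincFPT_{a+1}\subseteq\incFPT_a$ I would start from a $\capincFPT_{a+1}$-algorithm $\calA$ with bounding function $t$ and polynomial $p$ and — after arranging that $\calA$ also halts within $T(i):=t(\kappa(x))\cdot i^{a+1}\cdot p(|x|)$ steps for $i=|\Sol(x)|$ — build an $\incFPT_a$-algorithm $\calA'$ that simulates $\calA$ in time slices and buffers the produced solutions in a FIFO queue. Concretely, $\calA'$ works in rounds $j=1,2,\dots$: in round $j$ it first resumes the simulation of $\calA$ for $B_j:=(a+1)\cdot t(\kappa(x))\cdot j^a\cdot p(|x|)$ further steps, appending each newly produced solution to the queue, and then, if the queue is non-empty, dequeues exactly one solution and prints it; once $\calA$ has halted and the queue is empty, $\calA'$ stops. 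Since solutions are polynomially bounded and, on a RAM, the configuration of $\calA$ is simply stored and resumed, the $j$-th delay of $\calA'$ is $O(B_j)$ plus a polynomial overhead in $|x|$, i.e.\ at most $t'(\kappa(x))\cdot j^a\cdot p'(|x|)$ for a suitable computable $t'$ and polynomial $p'$ — exactly the shape required for $\incFPT_a$. (Computing $t(\kappa(x))$ and $p(|x|)$ is a one-time cost absorbed into $t'$ and $p'$.)

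The core of the argument is correctness: $\calA'$ must print exactly $\Sol(x)$ without duplicates and must never reach a round $r\le|\Sol(x)|$ with an empty queue. After round $r$ the simulation of $\calA$ has run for $\sum_{l=1}^{r}B_l=(a+1)\cdot t(\kappa(x))\cdot p(|x|)\cdot\sum_{l=1}^{r}l^a$ steps, and using the power-sum bound $\sum_{l=1}^{r}l^a\ge\int_0^{r}\xi^a\,\mathrm d\xi=r^{a+1}/(a+1)$ this is at least $T(r)$; hence $\calA$ has by then produced all of its first $r$ solutions, of which $\calA'$ has printed only $r-1$, so the queue is non-empty and round $r$ succeeds. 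Taking $r=|\Sol(x)|$ in the same estimate shows $\calA$ has halted by the end of that round, after which $\calA'$ terminates with the (by then empty) queue. Combining both inclusions gives $\capincFPT_{a+1}=\incFPT_a$. I expect the main obstacles to be precisely this non-emptiness invariant — which relies on the inequality above and is exactly where the exponent drops from $a+1$ to $a$ — together with the bookkeeping needed to make the postcomputation/termination behaviour of the simulated $\capincFPT_{a+1}$-algorithm fit its bound cleanly; the remaining time accounting is routine.
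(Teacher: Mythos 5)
Your proof is correct and follows essentially the same route as the paper: the forward inclusion by summing the first $i$ delays to get an $i^{a+1}$ cap, and the reverse inclusion by simulating the capped algorithm in time slices while buffering solutions in a queue and releasing one per slice, so that the $i$-th delay is governed by the increment $T(i+1)-T(i)=O(t(\kappa(x))\cdot i^a\cdot p(|x|))$. Your choice of fixed budgets $(a+1)\,t(\kappa(x))\,j^a\,p(|x|)$ with the power-sum lower bound, versus the paper's checkpoints at exactly $t(\kappa(x))\cdot p(|x|)\cdot i^{a+1}$ with a priority queue, is only a cosmetic variation.
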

\begin{proof}
	'$\supseteq$': 
	Let $E=(Q,\kappa,\Sol)$ be a PEP in $\incFPT_a$ via an algorithm $\calA$.
	Let $t\colon\N\to\N$ be a computable function and $p\colon\N\to\N$ be a polynomial as in Definition~\ref{def:slices} ({3''.}).
	For every $x\in Q$ algorithm $\calA$ outputs $i$ solutions with a running time bounded by
	\begin{align*}
		\sum_{k=0}^{i}t(\kappa(x))\cdot p(|x|)\cdot k^a =t(\kappa(x))\cdot p(|x|)\cdot\sum_{k=0}^{i}k^a 
		&\leq t(\kappa(x))\cdot p(|x|)\cdot(i+1)\cdot i^a\\
		&\leq 2\cdot t(\kappa(x))\cdot p(|x|)\cdot i^{a+1}.
	\end{align*}
	Accordingly, we have that $E\in\capincFPT_{a+1}$.
	
	'$\subseteq$':
	Now consider a problem $E=(Q,\kappa,\Sol)\in\capincFPT_{a+1}$ via $\calA$ enumerating $i$ elements of $\Sol(x)$ in time $<t(\kappa(x))i^{a+1}\cdot p(|x|)$ for all $x\in Q$, for all $0\leq i\leq |\Sol(x)|$, and some computable function $t$ (see Definition~\ref{def:slices} (3'.)).
	We will show that enumerating $\Sol(x)$ can be achieved with an $i$-th delay of $O(t(\kappa(x))\cdot p(|x|)\cdot q(i)+s)$ where $q(i)=(i+1)^{a+1}-i^{a+1}$ and $s$ bounds the solution sizes (which is polynomially in the input length; w.l.o.g.\ let $p$ be an upper bound for this polynomial).
	To reach this delay, one uses two counters: one for the steps of $\calA$ (\textsf{steps}) and one for the solutions, initialised with value $1$ (\textsf{solindex}).
	While simulating $\calA$, the solutions are inserted into a priority queue $Q$ instead of printing them.
	Eventually the step counter reaches $t(\kappa(x))\cdot p(|x|)\cdot \textsf{solindex}^{a+1}$.
	Then the first element of $Q$ is extracted, output and $\textsf{solindex}$ is incremented by one.
	In view of this, $\calA$ computed \textsf{solindex} many solutions and prints the next one (or $\calA$ already halted).
	Combining these observations leads to calculating the $i$-th delay:
	\begin{align*}
		  &O(t(\kappa(x))\cdot p(|x|)\cdot (i+1)^{a+1} - t(\kappa(x))\cdot p(|x|)\cdot i^{a+1} + s)\\
		=\; &O(t(\kappa(x))\cdot p(|x|)\cdot q(i) +p(|x|))\\
		=\; &O(t(\kappa(x))\cdot p(|x|)\cdot i^a) \qquad (\text{as }q(i)\in O(i^a))
	\end{align*}
	Clearly, this is a delay of the required form $t(\kappa(x))\cdot p(|x|)\cdot i^a$, and thereby $E\in\incFPT_a$.
\end{proof}

Note that from the previous result one can easily obtain the following corollary.

\begin{corollary}\label{cor:capinc=inc}
	$\capincFPT_1=\delayFPT$ and $\capincFPT=\incFPT$.
\end{corollary}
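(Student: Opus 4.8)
The plan is to read both equalities directly off the theorem $\capincFPT_{a+1}=\incFPT_a$ by specialising the exponent and then taking unions; essentially no new work is needed.

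For the first claim I would set $a=0$ in the theorem, obtaining $\capincFPT_1=\incFPT_0$. By the remark immediately following Definition~\ref{def:slices}, $\incFPT_0=\delayFPT$, since the $i$-th delay bound $t(\kappa(x))\cdot i^0\cdot p(|x|)$ degenerates to $t(\kappa(x))\cdot p(|x|)$. Chaining these gives $\capincFPT_1=\delayFPT$.

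For the second claim I would first record the trivial monotonicity $\capincFPT_a\subseteq\capincFPT_{a+1}$ for every $a\in\N$: an algorithm witnessing $\capincFPT_a$ outputs $i$ solutions in time $t(\kappa(x))\cdot i^a\cdot p(|x|)\le t(\kappa(x))\cdot i^{a+1}\cdot p(|x|)$ for $i\ge 1$, and the $i=0$ (precomputation) contribution is already polynomially bounded and hence absorbed, so the same algorithm is a $\capincFPT_{a+1}$-algorithm. Consequently $\capincFPT=\bigcup_{a\in\N}\capincFPT_a=\bigcup_{a\ge 1}\capincFPT_a=\bigcup_{a\in\N}\capincFPT_{a+1}$. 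Applying the theorem term by term turns this into $\bigcup_{a\in\N}\incFPT_a$, which equals $\incFPT$ by the remark after Definition~\ref{def:slices}. Hence $\capincFPT=\incFPT$.

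The only point that warrants a second look is the boundary behaviour at $i=0$ in the monotonicity step, which is harmless because that contribution is already dominated by the polynomial factor. Alternatively, one can bypass monotonicity altogether by noting $\capincFPT_0\subseteq\capincFPT_1=\delayFPT=\incFPT_0\subseteq\incFPT$, so that the $a=0$ term of the union $\bigcup_{a\in\N}\capincFPT_a$ is subsumed; together with the theorem for all $a\ge 0$ this again delivers $\capincFPT=\incFPT$. I expect no genuine obstacle in this corollary.
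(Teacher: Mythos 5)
Your proposal is correct and matches the paper's (implicit) argument: the paper derives this corollary directly from the theorem $\capincFPT_{a+1}=\incFPT_a$ together with the remark that $\incFPT_0=\delayFPT$ and $\bigcup_{a\in\N}\incFPT_a=\incFPT$, exactly as you do. Your extra care about the $a=0$ term of the union via monotonicity (or via $\capincFPT_0\subseteq\capincFPT_1$) is a harmless and welcome tidying of a detail the paper glosses over.
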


If one drops the restrictions 3.\ and 4.\ from Definition~\ref{def:enumprob}, then Capelli and Strozecki unconditionally show a strict hierarchy for the cap-classes via utilising the well-known time hierarchy theorem \cite{hs65}.
Of course, this result transfers also to the parametrised world, that is, to the same generalisation of $\capincFPT_a$.
Yet it is unknown whether a similar hierarchy can be unconditionally shown for these classes as well as for $\incFPT_a$.
This is a significant question of further research which is strengthened in the following section via connecting parametrised with classical enumeration complexity.

\section{Connecting with Classical Enumeration Complexity} 
\label{sec:connect}

Capelli and Strozecki~\cite{cs17} ask whether a polynomial delay algorithm using exponential memory can be translated into an output polynomial or even incremental polynomial algorithm requiring only polynomial space.
This question might imply a time-space-tradeoff, that is, avoiding exponential space for a $\delayP$-algorithm will yield the price of an increasing $\incP$ delay.
This remark perfectly contrasts with what has been observed by Creignou et~al. \cite{ckmmov15}.
They noticed that outputting solutions ordered by their size seems to require exponential space in case one aims for $\delayFPT$.
As mentioned in the introduction, Meier and Reinbold \cite{DBLP:conf/foiks/0001MR18} observed how a $\delayFPT$ algorithm with exponential space or a specific problem is transformed into an $\incFPT$ algorithm with polynomial space.
These results emphasise why we strive for and why it is valuable to have such a connection between these two enumeration complexity fields.
In this section, we will prove that a collapse of $\incP$ and $\outputP$ implies $\outputFPT$ collapsing to $\incFPT$ and \emph{vice versa}.

Capelli and Strozecki~\cite{cs17} investigated connections from enumeration complexity to function complexity classes of a specific type.
The classes of interest contain many notable computational problems such as integer factoring, local optimisation, or binary linear programming.

It is well known that function variants of classical complexity classes do not contain functions as members but relations instead.
Accordingly, we formally identify languages $Q\subseteq\Sigma^*$ and their solution-space $\AllSol\subseteq\Sigma^*$ with relations $\{(x,y)\mid y\in\Sol(x)\}$ and thereby extend the notation of PPs, EPs, and PEPs.
Nevertheless, it is easy to see how to utilise a \emph{witness function} $f$ for a given language $L$ such that $x\in L$ implies $f(x)=y$ for some $y$ such that $A(x,y)$ is true, and $f(x)=$``no'' otherwise, in order to match the term ``function complexity class'' more adequately.

\begin{definition}
	We say that a relation $A\subseteq\Sigma^*\times\Sigma^*$ is \emph{polynomially balanced} if $(x,y)\in A$ implies that $|y|\leq p(|x|)$ for some polynomial $p$.
\end{definition}

Observe that, for instances of a (P)EP $E$ over $\Sigma$, the length of its solutions are polynomially bounded.
Accordingly, the underlying relation $A\subseteq\Sigma^*\times\Sigma^*$ is polynomially balanced.

The following two definitions present four function complexity classes.

\begin{definition}[$\FP$ and $\FNP$]
Let $A\subseteq\Sigma^*\times\Sigma^*$ be a binary and polynomially balanced relation.
\begin{itemize}
	\item $A\in\F\P$ if there is a deterministic polynomial time algorithm that, given $x\in\Sigma^*$, can find some $y\in\Sigma^*$ such that $A(x,y)$ is true.
	\item $A\in\FNP$ if there is a deterministic polynomial time algorithm that can determine whether $A(x,y)$ is true, given both $x$ and $y$.
\end{itemize}
\end{definition}

\begin{definition}[$\F(\FPT)$ and $\F(\para\NP)$]\label{def:ffptfparanp}
Let $A\subseteq\Sigma^*\times\Sigma^*$ be a parametrised and polynomially balanced problem with parametrisation~$\kappa$.
\begin{itemize}
	\item $A\in\F(\FPT)$ if there exists a deterministic algorithm that, given $x\in\Sigma^*$, can find some $y\in\Sigma^*$ such that $A(x,y)$ is true and runs in time $f(\kappa(x))\cdot p(|x|)$, where $f$ is a computable function and $p$ is a polynomial.
	\item $A\in\F(\para\NP)$ if there exists a deterministic algorithm that, given both $x$ and $y$, can determine whether $A(x,y)$ is true and runs in time $f(\kappa(x))\cdot p(|x|)$, where $f$ is a computable function and $p$ is a polynomial.
\end{itemize}
\end{definition}

Note that the definition of $\F(\para\NP)$ follows the verifier characterisation of precomputation on the parameter as observed in Corollary~\ref{cor:paraNP-verifier}.
Similarly to the classical decision class, $\NP$, the runtime has to be independent of the witness length $|y|$.
\begin{definition}[$\F(\para\NP\cap\para\co\NP)$]
Let $A\subseteq\Sigma^*\times\Sigma^*$ and $B\subseteq\Sigma^*\times\Sigma^*$ be two parametrised and polynomially balanced problems with parametrisations $\kappa$ and $\kappa'$ satisfying the following requirement: for each $x\in\Sigma^*$ either there exists a $y$ with $(x,ay)\in A$, or there is a $z$ with $(x,bz)\in B$, where $a\neq b$ are two special markers in $\Sigma$.
We say that, $(A,B)\in\F(\para\NP\cap\para\co\NP)$ if there exists a nondeterministic algorithm that, given $x\in\Sigma^*$, can find a $y$ with $A(x,ay)$ or a $z$ with $B(x,bz)$ in time $f(\kappa(x))\cdot p(|x|)+g(\kappa'(x))\cdot q(|x|)$, where $p,q$ are polynomials and $f,g$ are computable functions.
\end{definition}

\iflong
\begin{table}
	\centering\small
	\begin{tabular}{lcp{2.5cm}p{4.2cm}}\toprule
		Class & machine & runtime & constraints\\\midrule
		$\FP$ & det. & $p(|x|)$ & find $y$ s.t.\ $A(x,y)$\\
		$\FNP$ & nond. & $p(|x|)$ & guess $y$ s.t.\ $A(x,y)$\\
		$\TF(\NP)$ & nond. & $p(|x|)$ & guess $y$ s.t.\ $A(x,y)$,\newline $A$ is total\\
		$\F(\FPT)$ & det. & $f(\kappa(x))\cdot p(|x|)$ & $\kappa$ parametrisation,\newline find $y$ s.t.\ $A(x,y)$\\
		$\F(\para\NP)$& nond. & $f(\kappa(x))\cdot p(|x|)$ & $\kappa$ parametrisation,\newline guess $y$ s.t.\ $A(x,y)$\\
		$\TF(\para\NP)$& nond. & $f(\kappa(x))\cdot p(|x|)$ & $\kappa$ parametrisation,\newline guess $y$ s.t.\ $A(x,y)$,\newline $A$ is total\\
		$\F(\para\NP\cap\para\co\NP)$ & nond. & $f(\kappa(x))\cdot p(|x|)+g(\kappa(x)')\cdot q(|x|)$ & relations $A,B$ with parametrisations $\kappa$ and $\kappa'$, either find $y$ with $A(x,ay)$ or $z$ with $B(x,bz)$\\
		\bottomrule
	\end{tabular}
	\caption{Overview of function complexity classes. In the machine column `det.'/`nond.' abbreviates `deterministic'/`nondeterministic'. In the runtime column $p$ and $q$ are polynomials, $f$ and $g$ are two computable functions, $\kappa$ is the parameter, and $x$ is the input.}
\end{table}
\else
\begin{table}
	\centering\small
	\begin{tabular}{lcp{2.5cm}p{6.5cm}}\toprule
		Class & machine & runtime & constraints\\\midrule
		$\FP$ & det. & $p(|x|)$ & find $y$ s.t.\ $A(x,y)$\\
		$\FNP$ & nond. & $p(|x|)$ & guess $y$ s.t.\ $A(x,y)$\\
		$\TF(\NP)$ & nond. & $p(|x|)$ & guess $y$ s.t.\ $A(x,y)$,$A$ is total\\
		$\F(\FPT)$ & det. & $f(\kappa(x))\cdot p(|x|)$ & $\kappa$ param., find $y$ s.t.\ $A(x,y)$\\
		$\F(\para\NP)$& nond. & $f(\kappa(x))\cdot p(|x|)$ & $\kappa$ param., guess $y$ s.t.\ $A(x,y)$\\
		$\TF(\para\NP)$& nond. & $f(\kappa(x))\cdot p(|x|)$ & $\kappa$ param., guess $y$ s.t.\ $A(x,y)$,\newline $A$ is total\\
		$\F(\para\NP\cap\para\co\NP)$ & nond. & $f(\kappa(x))\cdot p(|x|)+g(\kappa(x)')\cdot q(|x|)$ & relations $A,B$ with parametrisations $\kappa$ and $\kappa'$, either find $y$ with $A(x,ay)$ or $z$ with $B(x,bz)$\\
		\bottomrule
	\end{tabular}
	\caption{Overview of function complexity classes. In the machine column `det.'/`nond.' abbreviates `deterministic'/`nondeterministic'. In the runtime column $p$ is a polynomial and $x$ is the input.}
\end{table}
\fi

In 1994, Bellare and Goldwasser \cite{bg94} investigated functional versions of $\NP$ problems.
They observed that under standard complexity-theoretic assumptions (deterministic doubly exponential time is different from nondeterministic doubly exponential time) these problems are not self-reducible.
Bellare and Goldwasser noticed that these functional versions are harder than their corresponding decision variants.

A binary relation $R\subseteq\Sigma^*\times\Sigma^*$ is said to be \emph{total} if for every $x\in\Sigma^*$ there exists a $y\in\Sigma^*$ such that $(x,y)\in R$.

\begin{definition}[Total function complexity classes]\label{def:totalfunctioncomplexityclasses}
The class $\TF(\NP)$, resp., $\TF(\para\NP)$, is the restriction of $\F\NP$, resp., $\F(\para\NP)$, to total relations.
\end{definition}
The two previously defined classes are \emph{promise classes} in the sense that the existence of a witness $y$ with $A(x,y)$ is guaranteed.
Furthermore, defining a class $\TF(\P)$ or $\TF(\FPT)$ is not meaningful as it is known that $\FP\subseteq\TF(\NP)$ (see, e.g., the work of Johnson et~al.~\cite[Lemma 3]{DBLP:journals/jcss/JohnsonPY88} showing that $\FP$ is contained in $\complClFont{PLS}$, polynomial local search, which is contained in $\TF(\NP)$ by Megiddo and Papdimitriou \cite[p.~319]{mp91}).
Similar arguments apply to $\F(\FPT)\subseteq\TF(\para\NP)$).

Now, we can define a generic (parametrised) search and a generic (parametrised) enumeration problem.  
Note that the parameter is only relevant for the parametrised counterpart named in brackets.
\funcparaproblemdef
{($\para$)$\anotherSol_A$, where $A\subseteq\Sigma^*\times\Sigma^*$}
{$x\in\Sigma^*$, $S\subseteq\Sigma^*$}
{$\kappa\colon\Sigma^*\to\N$}
{output $y$ in $Sol(x)\setminus S$, or answer $S\supseteq \Sol(x)$}
 
\enumproblem
{($\para$)$\enum A$, where $A\subseteq\Sigma^*\times\Sigma^*$}
{$x\in\Sigma^*$}
{$\kappa\colon\Sigma^*\to\N$}
{output all $y$ with $A(x,y)$}

The two results of Capelli and Strozecki \cite[Prop.~7 and 9]{cs17} which are crucial in the course of this section are restated in the following.

\begin{proposition}[Prop.~7 in {\cite{cs17}}]\label{prop:anothersola-in-fp-iff-enuma-in-incp}
	Let $A\subseteq\Sigma^*\times\Sigma^*$ be a binary relation such that, given $x\in\Sigma^*$, one can find a $y$ with $A(x,y)$ in deterministic polynomial time. $\anotherSol_A\in\FP$ if and only if $\enum A\in\capincP$.
\end{proposition}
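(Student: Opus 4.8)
The statement to prove is Proposition~\ref{prop:anothersola-in-fp-iff-enuma-in-incp}, namely that for a binary relation $A$ whose solutions are findable in deterministic polynomial time, $\anotherSol_A\in\FP$ if and only if $\enum A\in\capincP$. Since Proposition~\ref{prop:capincp=incp} gives $\capincP=\incP$, it suffices to work with whichever of the two formulations is more convenient; I will argue with $\capincP$ directly because its ``$i$ solutions in time $p(|x|,i)$'' shape matches the $\anotherSol$ subroutine most cleanly. The two directions are a standard ``oracle/subroutine'' simulation in both ways, and I expect the main obstacle to be bookkeeping the set $S$ of already-found solutions so that each call to the $\anotherSol_A$ algorithm is fed a polynomially-sized input and the cumulative cost after $i$ outputs is genuinely polynomial in $|x|$ and $i$.

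\textbf{Direction ($\Rightarrow$): $\anotherSol_A\in\FP$ implies $\enum A\in\capincP$.} Suppose $M$ is a deterministic polynomial-time algorithm solving $\anotherSol_A$, say in time $q(|x|+\lVert S\rVert)$ for a polynomial $q$, where $\lVert S\rVert$ denotes the total encoding length of the set $S$. The enumeration algorithm for $\enum A$ maintains a list $S$, initially empty, and repeatedly calls $M(x,S)$: if $M$ returns a new solution $y\notin S$, output $y$ and add it to $S$; if $M$ reports $S\supseteq\Sol(x)$, halt. Because every $y$ with $A(x,y)$ satisfies $|y|\le p(|x|)$ (polynomial balance, which holds here since we are dealing with a ``good'' relation), after $i$ outputs we have $\lVert S\rVert\le i\cdot(p(|x|)+1)$, so the $(i{+}1)$-st call costs at most $q(|x|+i\cdot(p(|x|)+1))$, which is a polynomial in $|x|$ and $i$. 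Summing over the first $i$ calls gives total time $O(i\cdot q(|x|+i\cdot(p(|x|)+1)))$, still polynomial in $|x|$ and $i$; hence the algorithm outputs $i$ solutions in time $p'(|x|,i^a)$ for a suitable polynomial $p'$ and constant $a$, i.e.\ $\enum A\in\capincP_a\subseteq\capincP$.

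\textbf{Direction ($\Leftarrow$): $\enum A\in\capincP$ implies $\anotherSol_A\in\FP$.} Suppose $\calA$ is a $\capincP_a$-algorithm for $\enum A$, outputting $i$ solutions in time bounded by $r(|x|,i^a)$ for a polynomial $r$. Given an $\anotherSol_A$ instance $(x,S)$, run $\calA$ on $x$ while monitoring its output stream; each time $\calA$ emits a solution $y$, check whether $y\notin S$, and if so, stop and return $y$. If $\calA$ terminates having produced only solutions lying in $S$, return the answer ``$S\supseteq\Sol(x)$''. The key point is the running time: we need to find a solution outside $S$ (or certify exhaustion) after at most $|S|+1$ emissions from $\calA$, because the emitted solutions are pairwise distinct, so among any $|S|+1$ of them at least one avoids $S$. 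Therefore $\calA$ is run for at most $r(|x|,(|S|+1)^a)$ steps before a fresh solution appears or $\calA$ halts; since $|S|$ is part of the input, $(|S|+1)^a$ is polynomial in the input length, so the whole procedure is deterministic polynomial time and thus $\anotherSol_A\in\FP$. The membership-test $y\notin S$ and the comparison of $y$ against $S$ are polynomial-time since, by the definition of an enumeration problem, $\{(x,y)\mid y\in\Sol(x)\}\in\P$ and solutions are polynomially bounded.

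\textbf{Where care is needed.} The only genuinely delicate points are: (i) making sure that in the ($\Rightarrow$) direction the set $S$ passed to $M$ never blows up super-polynomially — this is exactly where polynomial balance of $A$ is used, and it is why the hypothesis that $\enum A$ is a ``good'' relation (Definition~\ref{def:enumprob}, items 3 and 4) matters; and (ii) in the ($\Leftarrow$) direction, correctly bounding the number of emissions of $\calA$ by $|S|+1$ and recognising that $\calA$ halting without producing a fresh solution is a valid certificate of $S\supseteq\Sol(x)$ — this needs that $\calA$, being an enumeration algorithm, outputs \emph{all} of $\Sol(x)$ and terminates. With these two observations in place, both simulations are routine, and the equivalence with $\incP$ follows from Proposition~\ref{prop:capincp=incp}.
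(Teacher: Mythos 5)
Your proof is correct and follows essentially the same route as the paper's argument (given explicitly for the parametrised pendant, Lemma~\ref{lem:anotherFPT=>IncFPT}): the forward direction iterates the $\anotherSol_A$ solver while accumulating $S$, and the backward direction simulates the enumeration algorithm for the time budget guaranteeing $|S|+1$ distinct outputs, using a pigeonhole argument to extract a fresh solution or certify $S\supseteq\Sol(x)$. Your explicit bookkeeping of $\lVert S\rVert$ via polynomial balance is a welcome extra detail, but the underlying construction is identical.
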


\begin{proposition}[Prop.~9 in {\cite{cs17}}]\label{prop:tfnp=fp-iff-outputp=incp}
	$\TF(\NP)=\FP$ if and only if $\outputP=\capincP$.
\end{proposition}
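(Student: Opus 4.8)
The plan is to prove the two implications separately, each time peeling off one trivial containment and supplying one construction. Since $\capincP\subseteq\outputP$ always holds (run a $\capincP$-algorithm to the end: for $i=|\Sol(x)|$ its bound reads $p(|x|,|\Sol(x)|)$) and $\FP\subseteq\TF(\NP)$ always holds (recalled in the text via $\complClFont{PLS}$), it suffices to prove that $\TF(\NP)\subseteq\FP$ and $\outputP\subseteq\capincP$ are equivalent; together with the trivial containments this is the asserted bi-implication. Throughout I identify an enumeration problem $E=(Q,\Sol)$ with the relation $A_E=\{(x,y)\mid y\in\Sol(x)\}$, so that $\enum A_E$ is $E$ itself.

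For $\TF(\NP)\subseteq\FP\Rightarrow\outputP\subseteq\capincP$: let $E=(Q,\Sol)\in\outputP$ via an algorithm $\calA$ of total running time $\le p(|x|,|\Sol(x)|)$. I would show $\anotherSol_{A_E}\in\FP$ and then apply Proposition~\ref{prop:anothersola-in-fp-iff-enuma-in-incp} to conclude $E=\enum A_E\in\capincP$; its precondition --- finding one solution of $x$ in polynomial time --- follows by running the $\anotherSol_{A_E}$-algorithm with $S=\emptyset$. To obtain $\anotherSol_{A_E}\in\FP$, define a relation $R$ on encodings $\langle x,S\rangle$, where $S$ is a finite set of strings (malformed encodings getting a dummy witness so $R$ stays total), with witnesses of two tagged kinds: a string $y\in\Sol(x)\setminus S$, or a complete halting computation trace $\tau$ of the \emph{deterministic} algorithm $\calA$ on $x$ all of whose output statements lie in $S$. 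If $S\supseteq\Sol(x)$ then $\calA$'s run is such a $\tau$, of length $\le p(|x|,|\Sol(x)|)\le p(|x|,|\langle x,S\rangle|)$; otherwise a short $y$ exists. Hence $R$ is total and polynomially balanced, and both witness kinds are $\P$-checkable, so $R\in\TF(\NP)$. By hypothesis $R\in\FP$; an $\FP$-machine for $R$ on $\langle x,S\rangle$ returns either a $y\in\Sol(x)\setminus S$, or a trace $\tau$, and since $\calA$ is deterministic this \emph{unique} trace prints exactly $\Sol(x)$, so ``all outputs in $S$'' certifies $\Sol(x)\subseteq S$ --- that is an $\FP$-algorithm for $\anotherSol_{A_E}$.

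For $\outputP\subseteq\capincP\Rightarrow\TF(\NP)\subseteq\FP$: given $A\in\TF(\NP)$ with a $\P$-verifier and witnesses of length $\le q(|x|)$, attach to it the enumeration problem $E_A$ with instance set $\Sigma^*$ and $\Sol_{E_A}(x)=\Sigma^{\le q(|x|)}\times\{W\mid |W|\le q(|x|),\ A(x,W)\}$. This is a genuine enumeration problem: membership is $\P$-checkable, solutions have length $O(q(|x|))$, and $\Sol_{E_A}(x)\ne\emptyset$ for all $x$ because $A$ is total. With $N=|\Sol_{E_A}(x)|$ we have $N\ge|\Sigma^{\le q(|x|)}|$, which is large enough that the brute-force procedure --- for each $W$ with $|W|\le q(|x|)$ test $A(x,W)$, and when it holds print $\langle u,W\rangle$ for all $u\in\Sigma^{\le q(|x|)}$ --- runs in time $|x|^{O(1)}\cdot N$, so $E_A\in\outputP$. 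By hypothesis $E_A\in\capincP$, so some algorithm prints the first solution of $\Sol_{E_A}(x)$ within $p'(|x|,1)=|x|^{O(1)}$ steps; that solution has the shape $\langle u,W\rangle$ with $A(x,W)$, and reading off $W$ gives a witness for $A$ in polynomial time, i.e., $A\in\FP$.

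I expect the construction of $E_A$ to be the crux. The naive candidate ``enumerate all witnesses of $A$'' fails, because that problem need not lie in $\outputP$: when $A$ has a single witness, enumerating it in time $\mathrm{poly}(|x|,1)$ would already be the conclusion $A\in\FP$. The remedy is to inflate the solution set multiplicatively by the index set $\Sigma^{\le q(|x|)}$ --- exponentially large but made of polynomially short strings --- so that the $\outputP$ budget $\mathrm{poly}(|x|,N)$ absorbs an exhaustive scan over all candidate witnesses, whereas the $\capincP$ budget for the \emph{first} solution remains $\mathrm{poly}(|x|)$ and must therefore surface a genuine witness quickly. What then remains is routine bookkeeping: that $|\Sigma^{\le q(|x|)}|$ dominates the number of candidate witnesses scanned, that solution encodings stay of length $O(q(|x|))$, and, in the first direction, that tagging and verifying the two witness kinds of $R$ runs in polynomial time.
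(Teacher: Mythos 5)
Your proof is correct and follows essentially the same route as the argument this paper gives for the parametrised analogue (Theorem~\ref{thm:tfparanp=ffpt-iff-outputfpt=incfpt}) --- which is the only place the argument is spelled out here, since Proposition~\ref{prop:tfnp=fp-iff-outputp=incp} itself is imported from Capelli and Strozecki without proof: one direction goes through $\anotherSol_A\in\FP$ via a total, polynomially balanced relation and Proposition~\ref{prop:anothersola-in-fp-iff-enuma-in-incp}, the other through padding the witness set so that brute force becomes output-polynomial while the first $\capincP$ output yields a witness. The only cosmetic difference is that you certify $S\supseteq\Sol(x)$ by supplying a full halting computation trace of the deterministic $\outputP$-algorithm as the witness, whereas the paper's version uses a dummy witness $\#$ and has the verifier itself simulate the algorithm for $p(|x|,|S|)$ steps; both are sound.
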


In 1991, Megiddo and Papadimitriou studied the complexity class $\TF(\NP)$ \cite{mp91}.
In a recent investigation, Goldberg and Papadimitriou introduced a rich theory around this complexity class that features also several aspects of proof theory \cite{gp17}.
Megiddo and Papadimitriou prove that the classes $\F(\NP\cap\co\NP)$ and $\TF(\NP)$ coincide.
It is easily lifted to the parametrised setting.
\begin{theorem}\label{thm:F(paraNPcapparacoNP)=TF(paraNP)}
	$\F(\para\NP\cap\para\co\NP)=\TF(\para\NP)$.
\end{theorem}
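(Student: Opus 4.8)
The plan is to mimic the classical proof of Megiddo and Papadimitriou that $\F(\NP\cap\co\NP)=\TF(\NP)$, checking that every step respects the "precomputation on the parameter" shape of fpt-time. The two inclusions are established separately.

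\textbf{Inclusion $\TF(\para\NP)\subseteq\F(\para\NP\cap\para\co\NP)$.} Let $A$ be a total, polynomially balanced relation with parametrisation $\kappa$ witnessing membership in $\TF(\para\NP)$, so there is a deterministic algorithm running in time $f(\kappa(x))\cdot p(|x|)$ that, given $(x,y)$, decides whether $A(x,y)$ holds, and for every $x$ some such $y$ with $|y|\le q(|x|)$ exists. The idea is to ``split'' $A$ into the two relations needed for $\F(\para\NP\cap\para\co\NP)$. Put $A'=\{(x,ay)\mid A(x,y)\}$ and let $B'=\{(x,bz)\mid z \text{ encodes a string } <y \text{ for all } y \text{ with } A(x,y)\}$ — i.e.\ $B'$ is essentially empty because $A$ is total. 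More carefully, following the original argument, one takes the lexicographically smallest witness: $A'$ records a pair $(x,ay)$ where $y$ is a witness, and $B'$ records $(x,bz)$ asserting ``$z$ is a witness and the bit just examined forces the search left'', but since $A$ is total the ``$B$-side'' never fires, so the promise ``for each $x$ either an $A'$-witness or a $B'$-witness exists'' is satisfied trivially by the $A'$-side. Verifying an $A'$-claim costs one call to the $\TF(\para\NP)$-verifier, hence time $f(\kappa(x))\cdot p(|x|)$; the $B'$-verifier runs in polynomial time. Set $\kappa'=\kappa$. Then $(A',B')\in\F(\para\NP\cap\para\co\NP)$ via the nondeterministic procedure that guesses a witness $y$, runs the verifier, and outputs $ay$.

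\textbf{Inclusion $\F(\para\NP\cap\para\co\NP)\subseteq\TF(\para\NP)$.} Let $(A,B)\in\F(\para\NP\cap\para\co\NP)$ with parametrisations $\kappa,\kappa'$ and the promise that for every $x$ there is a $y$ with $(x,ay)\in A$ or a $z$ with $(x,bz)\in B$, verifiable in time $f(\kappa(x))\cdot p(|x|)$ resp.\ $g(\kappa'(x))\cdot q(|x|)$. Define a single relation $C$ on $\Sigma^*$ by $C(x,w)\iff (w=ay \wedge A(x,ay)) \vee (w=bz\wedge B(x,bz))$. By the promise, $C$ is total. It is polynomially balanced because both $A$ and $B$ are. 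A witness $w$ for $C$ at $x$ can be verified by first reading the marker and then calling the appropriate verifier; this costs $O\bigl(f(\kappa(x))\cdot p(|x|)+g(\kappa'(x))\cdot q(|x|)\bigr)$, which by the $a\cdot b\le a^2+b^2$ trick (used already in the excerpt for $\FPT$, citing Flum–Grohe Prop.~1.34) can be rewritten in the product form $\tilde f(\tilde\kappa(x))\cdot\tilde p(|x|)$ for a suitable combined parametrisation $\tilde\kappa$ and computable $\tilde f$ — e.g.\ $\tilde\kappa(x)=\langle\kappa(x),\kappa'(x)\rangle$ and $\tilde f=\max\{f,g\}\circ(\text{projection})$ times a constant. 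Hence $C\in\TF(\para\NP)$.

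\textbf{Main obstacle.} The only genuinely delicate point is bookkeeping the \emph{two} parametrisations $\kappa,\kappa'$ in the $\F(\para\NP\cap\para\co\NP)$-to-$\TF(\para\NP)$ direction: $\TF(\para\NP)$ is defined with a single parametrisation, so one must argue that the sum $f(\kappa(x))\cdot p(|x|)+g(\kappa'(x))\cdot q(|x|)$ is itself an fpt-bound for some legitimate (polynomial-time computable) parametrisation. Since $\langle\kappa(x),\kappa'(x)\rangle$ is polynomial-time computable and both summands are dominated by $\max\{f,g\}(\max\{\kappa(x),\kappa'(x)\})\cdot(p+q)(|x|)$, this is routine, but it is the step where the parametrised setting differs from the classical one and therefore deserves an explicit sentence. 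Everything else is a verbatim transfer of the Megiddo–Papadimitriou argument.
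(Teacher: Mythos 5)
Your proof is correct and follows essentially the same route as the paper's: the inclusion $\TF(\para\NP)\subseteq\F(\para\NP\cap\para\co\NP)$ via pairing $A$ with an (essentially) empty $B$-side, and the converse via taking the union $A\cup B$, which is total by the promise. Your explicit treatment of combining the two parametrisations $\kappa,\kappa'$ into a single fpt-bound is a detail the paper glosses over and is handled adequately (modulo assuming $f,g$ monotone, or simply taking $\tilde f(\langle k,k'\rangle)=f(k)+g(k')$); the digression about lexicographically smallest witnesses in the first inclusion is unnecessary but harmless.
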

\iflong
\begin{proof}
	We restate the classical proofs of Megiddo and Papadimitriou \cite{mp91}.
	
	``$\subseteq$'': By definition of the class $\F(\para\NP\cap\para\co\NP)$, either there exists a $y$ with $(x,ay)\in A$, or there is a $z$ with $(x,bz)\in B$.
	As a result, the mapping $(A,B)\mapsto A\cup B$ suffices and $A\cup B$ is total.
	So we just need to guess which of $A$ or $B$ to choose. 
	``$\supseteq$'': the mapping $A\mapsto(A,\emptyset)$ is obvious.
\end{proof}
\fi

For the subsequently lemma (which is the parametrised pendant of Prop.~\ref{prop:anothersola-in-fp-iff-enuma-in-incp}) and theorem we follow the argumentation in the proofs of the classical results (Prop.~\ref{prop:anothersola-in-fp-iff-enuma-in-incp} and \ref{prop:tfnp=fp-iff-outputp=incp}).

\begin{lemma}\label{lem:anotherFPT=>IncFPT}
	Let $A\subseteq\Sigma^*$ be a parametrised problem with parametrisation $\kappa$. Then, 
	$\para\anotherSol_A\in\F(\FPT)$ if and only if $\para\enum A\in\capincFPT$.
\end{lemma}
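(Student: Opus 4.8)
The plan is to mirror the classical proof of Proposition~\ref{prop:anothersola-in-fp-iff-enuma-in-incp} (Capelli--Strozecki, Prop.~7) and simply track the parameter-dependent factor $t(\kappa(x))$ through both directions, using Corollary~\ref{cor:capinc=inc} ($\capincFPT=\incFPT$) to phrase the ``$\capincFPT$'' side in the most convenient form. Throughout, recall that the underlying relation of a PEP is polynomially balanced (condition~4 of Definition~\ref{def:enumprob}), so witness sizes are bounded by some polynomial $s$ in $|x|$; this is what will let all the bookkeeping stay within an fpt-type bound.

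First I would prove the direction ``$\para\anotherSol_A\in\F(\FPT)\Rightarrow\para\enum A\in\capincFPT$''. Given an $\F(\FPT)$-algorithm $M$ for $\para\anotherSol_A$ running in time $f(\kappa(x))\cdot p(|x|)$ (note this bound is independent of the size of the set $S$ one passes in, because $|S|$ only enters through the input length, which we must be slightly careful about — here I would follow Capelli--Strozecki and observe that $M$ is invoked with the set $S$ of solutions already produced, whose total size after $i$ steps is at most $i\cdot s(|x|)$, so the relevant time bound is $f(\kappa(x))\cdot p(|x|+i\cdot s(|x|))$, still of the shape $f(\kappa(x))\cdot p'(|x|)\cdot i^{O(1)}$). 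The enumeration algorithm repeatedly calls $M$ with the current list $S$ of already-output solutions: each call either yields a fresh $y\in\Sol(x)\setminus S$, which we output and append to $S$, or reports $S\supseteq\Sol(x)$, at which point we halt. Summing the costs of the first $i+1$ calls gives a bound of the form $t(\kappa(x))\cdot i^a\cdot p(|x|)$ for suitable $t$, $a$, $p$, so $\para\enum A\in\capincFPT_a\subseteq\capincFPT$.

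Next I would prove ``$\para\enum A\in\capincFPT\Rightarrow\para\anotherSol_A\in\F(\FPT)$''. Suppose $\calA$ enumerates $\Sol(x)$ with, say (using Corollary~\ref{cor:capinc=inc} to pass to an $\incFPT_a$ delay, or working directly with the $\capincFPT_{a}$ bound), the property that the first $i$ solutions are produced in time $t(\kappa(x))\cdot i^a\cdot p(|x|)$. Given $(x,S)$, run $\calA$ and watch its output stream: as soon as $\calA$ prints a solution $y\notin S$, output that $y$ and stop; if $\calA$ has printed $|S|+1$ distinct solutions all lying in $S$ — impossible — or, more precisely, once $\calA$ has printed $|S|+1$ solutions, by pigeonhole one of them is outside $S$, so we never need to watch beyond the $(|S|+1)$-st output; and if $\calA$ terminates having printed only solutions contained in $S$, answer $S\supseteq\Sol(x)$. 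Watching at most $|S|+1$ outputs costs $t(\kappa(x))\cdot(|S|+1)^a\cdot p(|x|)$, which is $f(\kappa(x))\cdot q(|x|)$ for the input $(x,S)$ since $|S|\le|\langle x,S\rangle|$; hence $\para\anotherSol_A\in\F(\FPT)$.

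The main obstacle I expect is purely a matter of careful accounting rather than a genuine mathematical difficulty: one must make sure that the polynomial appearing in the $\F(\FPT)$-time bound is allowed to depend on $|S|$ (it is, since $S$ is part of the input of $\para\anotherSol_A$), and conversely that in the forward direction the growing set $S$ fed back into $M$ does not blow the bound out of fpt shape — this is exactly where polynomial balancedness of $A$ (condition~4 of Definition~\ref{def:enumprob}) is used to bound $|S|$ after $i$ outputs by $i\cdot s(|x|)$, keeping the composite bound of the form $t(\kappa(x))\cdot i^{O(1)}\cdot p(|x|)$. A secondary point to state explicitly is the pigeonhole argument that caps the number of outputs one needs to inspect at $|S|+1$, which is what prevents the ``watch the enumeration'' simulation from running for an unbounded number of steps when $\Sol(x)$ is large.
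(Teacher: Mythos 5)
Your proposal is correct and follows essentially the same route as the paper's proof: the forward direction is the same repeated-invocation algorithm (call the $\F(\FPT)$ solver for $\para\anotherSol_A$ on the growing set $S$ of printed solutions, with polynomial balancedness keeping $|S|$ and hence each call within an fpt-times-$i^{O(1)}$ bound), and the backward direction is the same bounded simulation of the $\capincFPT$ enumerator for the time needed to produce $|S|+1$ outputs, followed by the pigeonhole extraction of a solution outside $S$ or the answer $S\supseteq\Sol(x)$ upon termination. The only differences are presentational (you watch the output stream online rather than running for a fixed step budget and inspecting afterwards, and you make the pigeonhole step and the $|S|\le i\cdot s(|x|)$ accounting explicit where the paper leaves them implicit).
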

\iflong
\begin{proof}
	``$\Rightarrow$'': 
	Let $\para\anotherSol_A\in\F(\FPT)$ via some algorithm $\calA$.
	Algorithm~\ref{alg:paraenuma-in-incfpt} shows that $\para\enum A\in\capincFPT$. 
\begin{algorithm}[ht]
	\caption{Algorithm showing $\para\enum A\in\capincFPT$.}\label{alg:paraenuma-in-incfpt}
	$S\leftarrow\emptyset$\;
	\Repeat{$S=A(x)$}{
		$y\leftarrow\calA(x,S)$\;
		$S\leftarrow S\cup\{y\}$\;
		\textbf{print} $y$\;
		}
\end{algorithm}
	The runtime of each step is $f(\kappa(x))\cdot p(|x|,|S|)$ for some polynomial $p$ and some computable function $f$.
	Consequently, this shows that $\para\enum A\in\capincFPT$.
	
	``$\Leftarrow$'': 
	Let $\para\enum A\in\capincFPT$.
	Then, there exists a parametrised enumeration algorithm $\calA$ that, given input $x\in\Sigma^*$, outputs $i\leq\Sol(x)$ elements in a runtime of $f(\kappa(x))\cdot i^a\cdot p(|x|)$ for some computable function $f$, $a\in\N$, and polynomial $p$.

	Now, we explain how to compute $\para\anotherSol_A$ in fpt-time.
	Given $(x,S)$, simulate $\calA$ for $f(\kappa(x))\cdot (|S|+1)^a\cdot p(|x|)$ steps.
	If the simulation successfully halts then $\Sol(x)$ is completely output.
	Just search a $y\in \Sol(x)\setminus S$ or output ``$S\supseteq\Sol(x)$''.
	Otherwise, if $\calA$ did not halt then it did output at least $|S|+1$ different elements.
	Finally, we just compute $A(x)\setminus S$ and print a new element.
\end{proof}
\fi

The next theorem translates the result of Proposition~\ref{prop:tfnp=fp-iff-outputp=incp} from classical enumeration complexity to the parametrised setting.

\begin{theorem}\label{thm:tfparanp=ffpt-iff-outputfpt=incfpt}
	$\TF(\para\NP)=\F(\FPT)$ if and only if $\outputFPT=\capincFPT$.
\end{theorem}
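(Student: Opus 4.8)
The plan is to transfer Capelli and Strozecki's proof of Proposition~\ref{prop:tfnp=fp-iff-outputp=incp} to the parametrised setting, with Lemma~\ref{lem:anotherFPT=>IncFPT} taking over the role that Proposition~\ref{prop:anothersola-in-fp-iff-enuma-in-incp} plays classically. Two inclusions come for free: $\capincFPT\subseteq\outputFPT$ (an algorithm outputting $i$ solutions within $t(\kappa(x))\cdot i^a\cdot p(|x|)$ outputs all of them within $t(\kappa(x))\cdot|\Sol(x)|^a\cdot p(|x|)$, an $\outputFPT$-bound; or invoke Corollary~\ref{cor:capinc=inc}) and $\F(\FPT)\subseteq\TF(\para\NP)$ as remarked after Definition~\ref{def:totalfunctioncomplexityclasses}. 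So it remains to show (a) $\TF(\para\NP)=\F(\FPT)$ implies $\outputFPT\subseteq\capincFPT$, and (b) $\outputFPT=\capincFPT$ implies $\TF(\para\NP)\subseteq\F(\FPT)$.

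For (a), let $E=(Q,\kappa,\Sol)\in\outputFPT$ via a machine $\mathcal M$ with running time at most $t(\kappa(x))\cdot p(|x|,|\Sol(x)|)$ for a monotone $p$, and identify $E$ with $A=\{(x,y)\mid y\in\Sol(x)\}$. By Lemma~\ref{lem:anotherFPT=>IncFPT} it suffices to prove $\para\anotherSol_A\in\F(\FPT)$, and for this I would view $\para\anotherSol_A$ as the search problem of a relation $B$ on instances $(x,S)$: a witness is either a pair $\langle\mathsf{sol},y\rangle$ with $y\in\Sol(x)\setminus S$, or a token $\mathsf{done}$ certified by the fact that simulating $\mathcal M$ on $x$ for $t(\kappa(x))\cdot p(|x|,|S|)$ steps it halts and all its outputs lie in $S$. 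The relation $B$ is total: if $\Sol(x)\not\subseteq S$ use the first kind of witness, and if $\Sol(x)\subseteq S$ then the number of solutions of $x$ is at most $|S|$, so by monotonicity $\mathcal M$ halts within the budgeted number of steps with output exactly $\Sol(x)\subseteq S$, so $\mathsf{done}$ works. Moreover $B$ is polynomially balanced and fpt-verifiable (the bounded simulation of $\mathcal M$ is fpt in $|(x,S)|$ with parameter $\kappa(x)$, and checking the at most $|S|$ produced outputs against $S$ is polynomial), hence $B\in\TF(\para\NP)=\F(\FPT)$. Running the fpt witness-finder for $B$ and decoding its output answers $\para\anotherSol_A$ in fpt-time, so Lemma~\ref{lem:anotherFPT=>IncFPT} yields $E\in\capincFPT$.

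For (b), let $R$ be a total, fpt-verifiable, polynomially balanced relation with parametrisation $\kappa$ and polynomial witness-length bound $\ell$. Using precomputation on the parameter (the $\FPT$-analogue of Proposition~\ref{prop:precomputation}) write $R(x,y)\iff(\langle x,\pi(\kappa(x))\rangle,y)\in R'$ with $\pi$ computable and $R'\in\PTime$. I would build a PEP $E^*$ whose instances are triples $\langle x,c,T\rangle$ where $T$ is a computation trace certifying $c=\pi(\kappa(x))$ (so instance-validity is decidable in polynomial time), parametrised by $\kappa_{E^*}(\langle x,c,T\rangle)=\kappa(x)$, with $\Sol^*(\langle x,c,T\rangle)=\{\langle w,y\rangle\mid w\in\{0,1\}^{\le\ell(|x|)},\ (\langle x,c\rangle,y)\in R'\}$ on valid instances and $\Sol^*=\emptyset$ otherwise. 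Two facts are true by design: on a valid instance $c=\pi(\kappa(x))$, so totality of $R$ makes the witness set nonempty, while the padding over all short strings $w$ makes $|\Sol^*|$ exponentially large; consequently $E^*$ is enumerable by brute force over $y\in\{0,1\}^{\le\ell(|x|)}$ in time polynomial in the instance length and $|\Sol^*|$, i.e.\ $E^*\in\outputP\subseteq\outputFPT$. By hypothesis $E^*\in\capincFPT$, hence on the instance $\langle x,\pi(\kappa(x)),T_x\rangle$ — whose length is $\mathrm{poly}(|x|)+O(f(\kappa(x)))$, since computing $\pi(\kappa(x))$ takes $f(\kappa(x))+\mathrm{poly}(|x|)$ time — one solution is produced within $t(\kappa(x))\cdot1^a\cdot\mathrm{poly}(|\langle x,\pi(\kappa(x)),T_x\rangle|)$ steps, which is fpt-time in $|x|$ with parameter $\kappa(x)$; that solution has the form $\langle w,y\rangle$ with $(\langle x,\pi(\kappa(x))\rangle,y)\in R'$, i.e.\ $R(x,y)$. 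Thus $R\in\F(\FPT)$. (This is the parametrised counterpart of the argument behind Proposition~\ref{prop:tfnp=fp-iff-outputp=incp}.)

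I expect the main obstacle to be the construction of $E^*$ in part (b): it must at the same time be a legitimate PEP — which forces compiling the parameter-dependent verification of $R$ into polynomial-time verification via precomputation — lie in $\outputP$ — which forces $|\Sol^*|$ to be exponentially large on every valid instance so that the brute-force cost is amortised against the number of solutions, hence the padding — and be such that any single enumerated solution already exposes a genuine $R$-witness rather than a padding artefact, hence the instance self-certifying the precomputed parameter value and the padding being a product $\langle w,y\rangle$ rather than extra trivial solutions. Keeping, on top of all that, the enlarged instance length fpt-bounded so that polynomial time on it pulls back to fpt-time on $x$ is the delicate balancing act; once $E^*$ is set up correctly, the rest of (b) and all of (a) are routine.
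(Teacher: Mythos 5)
Your proof is correct and follows the same overall strategy as the paper's. Direction (a) is, in substance, the paper's own argument: your relation $B$ on instances $(x,S)$ is the paper's relation $D$, with the same two witness types and the same bounded simulation of the $\outputFPT$-machine for $t(\kappa(x))\cdot p(|x|,|S|)$ steps certifying the ``no further solution'' answer, followed by the same appeal to Lemma~\ref{lem:anotherFPT=>IncFPT}. Direction (b) uses the same padding trick as the paper (which pads $A$ into $C(x,y\#w)$ with $|w|\le p(|x|)$ so that the brute-force enumerator is amortised against the exponentially many solutions), but you add one genuine refinement: the paper keeps the fpt-verifiable relation $A$ as the membership test for $\Sol$, which only gives an $\FPT$ check, whereas condition~3 of Definition~\ref{def:enumprob} demands that $\{(x,y)\mid y\in\Sol(x)\}\in\P$; your detour through the precomputation characterisation (Proposition~\ref{prop:precomputation}) with self-certifying instances $\langle x,\pi(\kappa(x)),T\rangle$ repairs exactly this, at the price of having to check that the inflated instance length stays fpt-bounded so that the first-solution delay pulls back to fpt-time in $|x|$ --- which you do. So your version is slightly more laborious in (b) but formally cleaner on that point; everything else matches the paper's proof.
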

\begin{proof}
	``$\Leftarrow$'':
	Let $A(x,y)\in\TF(\para\NP)$ be a parametrised language over $\Sigma^*\times\Sigma^*$ with parametrisation $\kappa$ and $M$ be the corresponding nondeterministic algorithm running in time $g(\kappa(x))\cdot p(|x|)$ for a polynomial $p$, a computable function $g$, and input $x$.
	Now, define the relation $C\subseteq\Sigma^*\times\big\{y\#w\mid y\in\Sigma^*, w\in\{0,1\}^*\big\}$ such that 
	\begin{align*}
	C(x,y\#w) \text{ if and only if }A(x,y)\text{ and }|w|\leq p(|x|). 
	\end{align*}	
	Then for each $x$ there exists $y\#w$ such that $C(x,y\#w)$ is true by definition of $\TF(\para\NP)$.
	Moreover, via padding, for each $x$, there exist at least $2^{p(|x|)}$ solutions $z$ such that $C(x,z)$ is true; in particular, $z$ is of the form $y\#w$ such that $A(x,y)$ is true.
	By construction, the trivial brute-force enumeration algorithm checking all $y\#w$ is in fpt-time for every element of $\Sol(x)$.
	Accordingly, this gives $\para\enum C\in\outputFPT$ as the runtime for $\outputFPT$ algorithms encompasses $|\Sol(x)|$ as a factor.
	
	Then $\para\enum C\in\capincFPT$ and the first $y\#w$ is output in fpt-time. 
	Since $A$ was arbitrary, we conclude with $\TF(\para\NP)=\F(\FPT)$ (as $\F(\FPT)\subseteq\TF(\para\NP)$ by definition).
	
	``$\Rightarrow$'': Consider a problem $\para\enum A\in\outputFPT$ with $\para\enum A=(Q,\kappa,\Sol)$. 
	For every $x\in Q$ and $S\subseteq\Sol(x)$ let $D((x,S),y)$ be true if and only if either ($y\in \Sol(x)\setminus S$) or ($y=\#$ and $S\supseteq \Sol(x)$).
	Then $D\in\TF(\para\NP)$:
	\begin{enumerate}
		\item $D$ is total by construction,
		\item as $\para\enum A$ is a parametrised enumeration problem, there exists a polynomial $q$ such that for every solution $y\in\Sol(x)$ we have $|y|\leq q(|x|)$, and
		\item finally, we need to show that $D((x,S),y)$ can be verified in deterministic time $f(\kappa(x))\cdot p(|x|,|S|,|y|)$ for a computable function $f$ and a polynomial $p$.
		\begin{description}
		\item[Case $\bm{{y\not=\#}}$:] $D((x,S),y)$ is true if and only if $y\in \Sol(x)\setminus S$. 
		This requires testing whether $y\notin S$ and $y\in \Sol(x)$. 
		Both can be tested in polynomial time: $p(|y|,|S|)$, respectively, $p(|x|)$ which follows from Def.~\ref{def:para-enum-pb} (4.).
		\item[Case $\bm{{y=\#}}$:] $D((x,S),y)$ is true if and only if $S\supseteq\Sol(x)$.
		As $\para\enum A\in\outputFPT$ there is an algorithm $\calA$ outputting $\Sol(x)$ in $f(\kappa(x))\cdot p(|x|, |\Sol(x)|)$ steps.
		Now, run $\calA$ for at most $f(\kappa(x))\cdot p(|x|,|S|)$ steps.
		Then finishing within this steps-bound implies that $\Sol(x)$ is completely generated and we merely check $S\supseteq\Sol(x)$ in time polynomial in $|S|$.
		If $\calA$ did not halt within the steps-bound we can deduce $|\Sol(x)|>|S|$.
		Accordingly, $S\not\supseteq\Sol(x)$ follows and $D((x,S),y)$ is not true.
	\end{description}
	\end{enumerate}

	As $\TF(\para\NP)=\FPT$ is true by precondition, given a tuple $(x,S)$, we either can compute a $y$ with $y\in \Sol(x)\setminus S$ or decide there is none (and then set $y=\#$) in fpt-time.
	Accordingly, $\para\anotherSol_A$ is in $\F(\FPT)$ and, by applying Lemma~\ref{lem:anotherFPT=>IncFPT}, we get $\para\enum A$ is in $\capincFPT$.
	This settles that $\outputFPT=\capincFPT$ and concludes the proof.
\end{proof}

The next theorem builds on previous statements in order to connect a collapse in the classical function world to a collapse in the parametrised function world.

\begin{theorem}\label{cor:tfparanp=ffpt-iff-tfnp=fp}
	$\TF(\para\NP)=\F(\FPT)$ if and only if $\TF(\NP)=\FP$.
\end{theorem}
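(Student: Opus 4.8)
The plan is to avoid manipulating the function classes directly and instead chain the two equivalences that are already on the table. By Theorem~\ref{thm:tfparanp=ffpt-iff-outputfpt=incfpt} we have $\TF(\para\NP)=\F(\FPT)$ if and only if $\outputFPT=\capincFPT$, and by Proposition~\ref{prop:tfnp=fp-iff-outputp=incp} we have $\TF(\NP)=\FP$ if and only if $\outputP=\capincP$. Hence the whole theorem reduces to the single enumeration-theoretic equivalence
\[
  \outputFPT=\capincFPT \iff \outputP=\capincP,
\]
whose two directions I would handle by the two standard moves of parametrised complexity: adding a trivial parameter, and padding the input by a function of the parameter.

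For the direction ``$\outputFPT=\capincFPT \Rightarrow \outputP=\capincP$'' I would use the trivial parametrisation $\kappa_{\mathrm{one}}\equiv 1$. Given a classical EP $E=(Q,\Sol)\in\outputP$, view it as the PEP $(Q,\kappa_{\mathrm{one}},\Sol)$: since the factor $t(\kappa_{\mathrm{one}}(x))=t(1)$ is constant, an $\outputP$-algorithm for $E$ already is an $\outputFPT$-algorithm, so $E\in\outputFPT=\capincFPT$. Writing $\capincFPT=\bigcup_{a}\capincFPT_a$ and collapsing the constant parameter-factor once more, the bound $t(1)\cdot i^{a}\cdot p(|x|)=O(p(|x|,i^{a}))$ is exactly a $\capincP_a$-bound, whence $E\in\capincP$.

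For the converse ``$\outputP=\capincP \Rightarrow \outputFPT=\capincFPT$'' I would pad. Let $E=(Q,\kappa,\Sol)\in\outputFPT$ via an algorithm $\calA$ with runtime $f(\kappa(x))\cdot p(|x|,|\Sol(x)|)$. Fix a computable function $g\ge f$ that is \emph{honest}, i.e.\ $g(k)$ is computable within $g(k)$ steps (such a $g$ exists for every computable $f$), and define the classical EP $E^{*}=(Q^{*},\Sol^{*})$ by $Q^{*}:=\{\,x\#1^{c}\mid x\in Q,\ c\ge g(\kappa(x))\,\}$ and $\Sol^{*}(x\#1^{c}):=\Sol(x)$. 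Honesty makes $Q^{*}$ decidable in polynomial time: on input $x\#1^{c}$ compute $\kappa(x)$ and then simulate the computation of $g(\kappa(x))$ for at most $c+1$ steps; if it halts we compare its value with $c$, and if it does not halt then $g(\kappa(x))>c$ and we reject. The remaining conditions of Definition~\ref{def:enumprob} for $E^{*}$ are immediate, and $E^{*}\in\outputP$: on $x\#1^{c}\in Q^{*}$ simply run $\calA(x)$, whose runtime $f(\kappa(x))\cdot p(|x|,|\Sol(x)|)\le c\cdot p(|x|,|\Sol(x)|)$ is polynomial in $|x\#1^{c}|$ and $|\Sol^{*}(x\#1^{c})|$. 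By hypothesis $E^{*}\in\capincP_a$ for some $a$, via an algorithm $\mathcal{B}$ outputting $i$ elements of $\Sol^{*}(w)$ in time $O(p^{*}(|w|,i^{a}))$. An $\capincFPT$-algorithm for $E$ then proceeds as follows: on $x\in Q$ compute $c:=g(\kappa(x))$, set $w:=x\#1^{c}$, and run $\mathcal{B}(w)$; it outputs $i$ elements of $\Sol^{*}(w)=\Sol(x)$ within $O(p^{*}(|x|+c+1,i^{a}))$ steps, and since $c=g(\kappa(x))$ depends on the parameter alone, degree bookkeeping rewrites this as $t(\kappa(x))\cdot i^{a'}\cdot q(|x|)$ for a suitable computable $t$, polynomial $q$, and exponent $a'$ a constant multiple of $a$. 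Hence $E\in\capincFPT_{a'}\subseteq\capincFPT$, which together with Theorem~\ref{thm:tfparanp=ffpt-iff-outputfpt=incfpt} and Proposition~\ref{prop:tfnp=fp-iff-outputp=incp} settles the theorem.

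I expect the only genuine obstacle to be the polynomial-time recognisability of $Q^{*}$ in the converse direction: the naive pad $1^{f(\kappa(x))}$ cannot be checked for correctness quickly, so one must replace $f$ by a sufficiently time-constructible dominating function $g$ and detect an undersized pad merely by timing out the evaluation of $g(\kappa(x))$. The remaining parts — verifying the EP axioms for $E^{*}$ and the monotonicity/degree bookkeeping that turns $O(p^{*}(|x|+c+1,i^{a}))$ into the required $\capincFPT_{a'}$-shape — are routine.
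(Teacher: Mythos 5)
Your argument is correct in substance, but it takes a genuinely different route from the paper. The paper proves the theorem directly at the level of the function classes: the forward direction equips a total $\NP$ relation with the trivial parametrisation, and the backward direction invokes the Flum--Grohe precomputation characterisation of $\para\NP$ (Proposition~\ref{prop:precomputation}) to peel off $\pi(\kappa(x))$, apply the assumed $\TF(\NP)=\FP$ algorithm to the resulting total classical relation, and recombine into an fpt runtime $f_\pi(\kappa(x))+q(|\pi(\kappa(x))|,|x|)$. You instead route everything through the enumeration classes, reducing the theorem via Theorem~\ref{thm:tfparanp=ffpt-iff-outputfpt=incfpt} and Proposition~\ref{prop:tfnp=fp-iff-outputp=incp} to the equivalence $\outputFPT=\capincFPT\iff\outputP=\capincP$, which you then prove by the trivial-parametrisation and padding tricks. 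This is not circular, since your proof of that equivalence is independent of the present theorem; note, however, that in the paper that equivalence is the \emph{corollary} derived from this theorem, so your decomposition inverts the paper's logical order. What your approach buys is a self-contained, purely enumeration-theoretic proof of the corollary (the padding argument is essentially the enumeration analogue of the precomputation characterisation); what the paper's approach buys is brevity, since working with relations directly avoids having to re-verify the axioms of Definition~\ref{def:enumprob} for a padded problem. On that last point there is one small repair you should make: as you define it, $\Sol^*(x\#1^c)=\Sol(x)$ is nonempty even when $c<g(\kappa(x))$, i.e.\ when $x\#1^c\notin Q^*$, which violates condition 2 of Definition~\ref{def:enumprob}; you should set $\Sol^*(w)=\emptyset$ for all $w\notin Q^*$ (membership in $Q^*$ being decidable in polynomial time by your honesty/time-out argument, this costs nothing). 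With that fix, the EP axioms, the $\outputP$ bound $c\cdot p(|x|,|\Sol(x)|)$, and the degree bookkeeping turning $p^*(|x|+c+1,i^a)$ into $t(\kappa(x))\cdot i^{a'}\cdot q(|x|)$ all go through as you describe.
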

\begin{proof}
Let us start with the easy direction.

``$\Rightarrow$'':	
		Let $A\subseteq \Sigma^{*}\times \Sigma^{*}$ be a total relation in $\TF(\NP)$. 
		By definition of $\TF(\NP)$ and $\TF(\para\NP)$, $(A,\kappa)\in \TF(\para\NP)$ where $\kappa$ is the trivial parametrisation assigning to each $x\in\Sigma^{*}$ the empty string $\varepsilon$. 
		Since $\TF(\para\NP) = \F(\FPT)$, there exists a computable function $f\colon\Sigma^{*}\to \Sigma^{*}$, a polynomial $p$, and a deterministic algorithm $A$, that, given the input $x\in \Sigma^{*}$, outputs some $y\in\Sigma^{*}$ such that $A(x,y)$ in time $f(\kappa(x))\cdot p(x)$. 
		As $\kappa(x)=\varepsilon$ for each $x$, $A$ runs in polynomial time. 
		Accordingly, we have $A\in\FP$ and thereby $\FP=\TF(\NP)$ as $A$ was chosen arbitrarily. 

	``$\Leftarrow$'': 
	Choose some $B\in\TF(\para\NP)$ via machine $M$ running in time $f(\kappa(x))\cdot p(|x|)$ for a polynomial $p$, a computable function $f$, a parametrisation $\kappa$, and an input $x$.
	By Proposition~\ref{prop:precomputation}, we know that there exists a computable function $\pi\colon\Sigma^*\to\Sigma^*$ and a problem $B'\subseteq\Sigma^*\times\Sigma^*\times\Sigma^*$ such that $B'\in\NP$ and the following is true: for all instances $x\in\Sigma^*$ and all solutions $y\in\Sigma^*$, we have that $(x,y)\in B$ if and only if $\bigl((x,\pi(\kappa(x)),y\bigr)\in B'$.
		
		As $B$ is total, $B'$ is total with respect to the third argument as well.
		It follows by assumption that $B'$ is also in $\FP$ via some machine $M'$ having a runtime bounded by a polynomial $q$ in the input length.
		Accordingly, we define a machine $\widetilde M$ for $B$ which, given input $x\in\Sigma^*$, computes $\pi(\kappa(x))$, then simulates $M'$ on $(x,\pi(\kappa(x)))$, and runs in time
		\begin{align}
			f_\pi(\kappa(x))+ q(|\pi(\kappa(x)|,|x|),\tag{$\star$}\label{runtime}
		\end{align}
		 where $f_\pi\colon\Sigma^*\to\N$ is a computable function that estimates the runtime of computing $\pi(\kappa(x))$.
		 Clearly, Equation \eqref{runtime} is an fpt-runtime witnessing $B\in\F(\FPT)$.
		Accordingly, we can deduce that $\TF(\para\NP)=\F(\FPT)$ as $B$ was chosen arbitrarily.
\end{proof}

Combining Theorem~\ref{thm:tfparanp=ffpt-iff-outputfpt=incfpt} with Theorem~\ref{cor:tfparanp=ffpt-iff-tfnp=fp} and finally Proposition~\ref{prop:tfnp=fp-iff-outputp=incp}, connects the parametrised enumeration world with the classical enumeration world.

\begin{corollary}
	$\outputFPT=\capincFPT$ if and only if $\outputP=\capincP$.	
\end{corollary}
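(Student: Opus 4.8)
The plan is to obtain the corollary as a one-line concatenation of three biconditionals already established in this section, so there is essentially no new work to do beyond lining up the pivots.

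First I would apply Theorem~\ref{thm:tfparanp=ffpt-iff-outputfpt=incfpt}, which asserts that $\TF(\para\NP)=\F(\FPT)$ holds if and only if $\outputFPT=\capincFPT$; this replaces the enumeration-side statement we want to prove by an equivalent statement purely about parametrised function classes. Next I would use Theorem~\ref{cor:tfparanp=ffpt-iff-tfnp=fp} to cross from the parametrised function world to the classical one: $\TF(\para\NP)=\F(\FPT)$ if and only if $\TF(\NP)=\FP$. Finally, Proposition~\ref{prop:tfnp=fp-iff-outputp=incp} of Capelli and Strozecki closes the loop on the classical side with $\TF(\NP)=\FP$ if and only if $\outputP=\capincP$.

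Since the pivot statements $\TF(\para\NP)=\F(\FPT)$ and $\TF(\NP)=\FP$ appear verbatim on both sides of consecutive biconditionals, chaining the three equivalences gives exactly $\outputFPT=\capincFPT$ if and only if $\outputP=\capincP$, which is the claim. I do not expect any genuine obstacle here: the only thing worth double-checking is that none of the three cited results carries a hidden side condition that would block the concatenation, and a quick inspection confirms all three are unconditional biconditionals between the stated class equalities. The substance of the result therefore lies entirely in its three ingredients, not in this combination step.

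As a closing remark I would point out that, together with Corollary~\ref{cor:capinc=inc} (which gives $\capincFPT=\incFPT$) and Proposition~\ref{prop:capincp=incp} (which gives $\capincP=\incP$), this corollary immediately yields the paper's headline statement $\incFPT=\outputFPT$ if and only if $\incP=\outputP$, so it is worth phrasing the final sentence of the proof so that this consequence is explicit.
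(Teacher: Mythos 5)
Your proposal is correct and matches the paper's own argument exactly: the corollary is obtained by chaining Theorem~\ref{thm:tfparanp=ffpt-iff-outputfpt=incfpt}, Theorem~\ref{cor:tfparanp=ffpt-iff-tfnp=fp}, and Proposition~\ref{prop:tfnp=fp-iff-outputp=incp}. Your closing remark about deriving $\incFPT=\outputFPT\Leftrightarrow\incP=\outputP$ via Corollary~\ref{cor:capinc=inc} and Proposition~\ref{prop:capincp=incp} is also precisely how the paper states its subsequent corollary.
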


If one does not consider space requirements, we can deduce the following corollary by applying Corollary~\ref{cor:capinc=inc} and Proposition~\ref{prop:capincp=incp}.
\begin{corollary}
	$\outputFPT=\incFPT$ if and only if $\outputP=\incP$.	
\end{corollary}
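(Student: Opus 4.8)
The plan is to obtain this corollary by a purely formal substitution into the immediately preceding corollary, which already states that $\outputFPT=\capincFPT$ if and only if $\outputP=\capincP$. The only additional ingredients needed are the two unconditional collapses of the capped incremental classes onto their uncapped counterparts, once in the parametrised world and once in the classical world.

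First I would invoke Corollary~\ref{cor:capinc=inc}, which yields $\capincFPT=\incFPT$; this lets me replace $\capincFPT$ by $\incFPT$ on the left-hand side of the biconditional. Next I would invoke Proposition~\ref{prop:capincp=incp}, which yields $\capincP=\incP$; this lets me replace $\capincP$ by $\incP$ on the right-hand side. Since each replacement is an equality of complexity classes, substituting both into the statement ``$\outputFPT=\capincFPT$ iff $\outputP=\capincP$'' preserves the biconditional and produces exactly ``$\outputFPT=\incFPT$ iff $\outputP=\incP$'', as claimed.

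There is no genuine obstacle at this stage: all the substantive work has already been done in Theorem~\ref{thm:tfparanp=ffpt-iff-outputfpt=incfpt}, Theorem~\ref{cor:tfparanp=ffpt-iff-tfnp=fp}, and Proposition~\ref{prop:tfnp=fp-iff-outputp=incp}, which together give the preceding corollary, and the present step is only bookkeeping. The one point worth flagging — and the reason the statement is phrased without any space bound — is the subtlety already noted in the text: the collapses $\capincFPT=\incFPT$ and $\capincP=\incP$ are realised through an exponentially large priority queue, so the equivalence is claimed only for the space-unrestricted setting. If one insists on polynomial space, the capped and uncapped variants are no longer known to coincide, the substitution above is no longer available, and settling the refined question would require genuinely new arguments rather than the routine chaining used here.
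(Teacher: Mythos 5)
Your proposal is correct and matches the paper's own derivation exactly: the corollary is obtained from the preceding one by substituting $\capincFPT=\incFPT$ (Corollary~\ref{cor:capinc=inc}) and $\capincP=\incP$ (Proposition~\ref{prop:capincp=incp}), with the same caveat that these collapses hold only in the space-unrestricted setting.
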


Now, the observations made by Capelli and Strozecki~\cite{cs17} have directly been transferred to our setting.
Accordingly, for instance, the existence of one-way functions would separate $\outputFPT$ from $\incFPT$ as well.
Also a collapse of $\outputFPT$ to $\capincFPT$ would yield a collapse of $\TF(\NP)$ to $\FP$ (Prop.~\ref{prop:tfnp=fp-iff-outputp=incp}) and as well as of $\P$ to $\NP\cap\co\NP$ (due to $\TF(\NP)=\F(\NP\cap\co\NP)$ \cite{mp91}).

\section{Conclusion and Outlook}
We presented the first connection of parametrised enumeration to classical enumeration by showing that a collapse of $\outputFPT$ to $\incFPT$ implies collapsing $\outputP$ to $\capincP$ and \emph{vice versa}.
While proving this result, we showed equivalences of collapses of parametrised function classes developed in this paper to collapses of classical function classes.
In particular, we proved that $\TF(\para\NP)=\F(\FPT)$ if and only if $\TF(\NP)=\FP$. 
The function complexity class $\TF(\NP)$, which has $\TF(\para\NP)$ as its parametrised counterpart, contains significant cryptography-related problems such as factoring.
Furthermore, we studied a parametrised incremental $\FPT$ time enumeration hierarchy on the level of exponent slices (Def.~\ref{def:slices}) and observed that $\capincFPT_1=\delayFPT$.
Also, an interleaving of the two hierarchies, $\incFPT_a$ and $\capincFPT_a$, has been shown.
The results of this paper underline that parametrised enumeration complexity is an area worthwhile to study as there are deep connections to its classical counterpart.

Future research should build on these classes to unveil the presence of exponential space in this setting and give a definite answer to the observed time-space-tradeoffs.
Also, it would be engaging to study connections from parametrised enumeration to proof theory via the work of Goldberg and Papadimitriou \cite{gp17}.
We want to close with the question whether there exist intermediate natural problems between $\F(\FPT)$ and $\TF(\para\NP)$ which are relevant in some area beyond the trivial parametrisation $\kappa_\text{one}(x)=1$.

\subparagraph*{Acknowledgements}
The author thanks Martin Lück (Hannover) and Johannes Schmidt (Jönkö\-ping) for fruitful discussions.

%
\bibliographystyle{plainurl}
\bibliography{incFPT}

\begin{thebibliography}{10}

\bibitem{DBLP:journals/jal/AlberFN04}
J.~Alber, H.~Fernau, and R.~Niedermeier.
\newblock Parameterized complexity: exponential speed-up for planar graph
  problems.
\newblock {\em J. Algorithms}, 52(1):26--56, 2004.
\newblock URL: \url{https://doi.org/10.1016/j.jalgor.2004.03.005}, \href
  {http://dx.doi.org/10.1016/j.jalgor.2004.03.005}
  {\path{doi:10.1016/j.jalgor.2004.03.005}}.

\bibitem{af96}
D.~Avis and K.~Fukuda.
\newblock Reverse search for enumeration.
\newblock {\em Discrete Applied Mathematics}, 65(1-3):21--46, 1996.
\newblock URL: \url{https://dx.doi.org/10.1016/0166-218X(95)00026-N}, \href
  {http://dx.doi.org/10.1016/0166-218X(95)00026-N}
  {\path{doi:10.1016/0166-218X(95)00026-N}}.

\bibitem{bg94}
M.~Bellare and S.~Goldwasser.
\newblock The complexity of decision versus search.
\newblock {\em {SIAM} J. Comput.}, 23(1):97--119, 1994.
\newblock URL: \url{https://doi.org/10.1137/S0097539792228289}, \href
  {http://dx.doi.org/10.1137/S0097539792228289}
  {\path{doi:10.1137/S0097539792228289}}.

\bibitem{bfms18}
T.~Bl{\"a}sius, T.~Friedrich, K.~Meeks, and M.~Schirneck.
\newblock On the enumeration of minimal hitting sets in lexicographical order.
\newblock {\em CoRR}, 1805.01310, 2018.

\bibitem{cs17}
F.~Capelli and Y.~Strozecki.
\newblock {On The Complexity of Enumeration}.
\newblock {\em CoRR}, 1703.01928v2, 2017.

\bibitem{DBLP:conf/caldam/ChauhanR15}
A.~Chauhan and B.~V.~Raghavendra Rao.
\newblock Parameterized analogues of probabilistic computation.
\newblock In S.~Ganguly and R.~Krishnamurti, editors, {\em Algorithms and
  Discrete Applied Mathematics - First International Conference, {CALDAM} 2015,
  Kanpur, India, February 8-10, 2015. Proceedings}, volume 8959 of {\em Lecture
  Notes in Computer Science}, pages 181--192. Springer, 2015.
\newblock URL: \url{https://doi.org/10.1007/978-3-319-14974-5_18}, \href
  {http://dx.doi.org/10.1007/978-3-319-14974-5_18}
  {\path{doi:10.1007/978-3-319-14974-5_18}}.

\bibitem{CreignouH97}
N.~Creignou and J.-J. H{\'e}brard.
\newblock On generating all solutions of generalized satisfiability problems.
\newblock {\em Theoretical Informatics and Applications}, 31(6):499--511, 1997.

\bibitem{ckpsv17}
N.~Creignou, M.~Kr{\"{o}}ll, R.~Pichler, S.~Skritek, and H.~Vollmer.
\newblock On the complexity of hard enumeration problems.
\newblock In F.~Drewes, C.~Martín{-}Vide, and B.~Truthe, editors, {\em
  Language and Automata Theory and Applications - 11th International
  Conference, {LATA} 2017, Ume{\aa}, Sweden, March 6-9, 2017, Proceedings},
  volume 10168 of {\em Lecture Notes in Computer Science}, pages 183--195,
  2017.
\newblock URL: \url{https://dx.doi.org/10.1007/978-3-319-53733-7_13}, \href
  {http://dx.doi.org/10.1007/978-3-319-53733-7_13}
  {\path{doi:10.1007/978-3-319-53733-7_13}}.

\bibitem{ckmmov15}
N.~Creignou, R.~Ktari, A.~Meier, J.{-}S. M{\"{u}}ller, F.~Olive, and
  H.~Vollmer.
\newblock Parameterized enumeration for modification problems.
\newblock In A.{-}H. Dediu, E.~Formenti, C.~Martín{-}Vide, and B.~Truthe,
  editors, {\em Language and Automata Theory and Applications - 9th
  International Conference, {LATA} 2015, Nice, France, March 2-6, 2015,
  Proceedings}, volume 8977 of {\em Lecture Notes in Computer Science}, pages
  524--536. Springer, 2015.
\newblock URL: \url{https://dx.doi.org/10.1007/978-3-319-15579-1_41}, \href
  {http://dx.doi.org/10.1007/978-3-319-15579-1_41}
  {\path{doi:10.1007/978-3-319-15579-1_41}}.

\bibitem{cmmsv13}
N.~Creignou, A.~Meier, J.{-}S. M{\"{u}}ller, J.~Schmidt, and H.~Vollmer.
\newblock Paradigms for parameterized enumeration.
\newblock In K.~Chatterjee and J.~Sgall, editors, {\em Mathematical Foundations
  of Computer Science 2013 - 38th International Symposium, {MFCS} 2013,
  Klosterneuburg, Austria, August 26-30, 2013. Proceedings}, volume 8087 of
  {\em Lecture Notes in Computer Science}, pages 290--301. Springer, 2013.
\newblock URL: \url{https://dx.doi.org/10.1007/978-3-642-40313-2_27}, \href
  {http://dx.doi.org/10.1007/978-3-642-40313-2_27}
  {\path{doi:10.1007/978-3-642-40313-2_27}}.

\bibitem{cmmsv17}
N.~Creignou, A.~Meier, J.{-}S. M{\"{u}}ller, J.~Schmidt, and H.~Vollmer.
\newblock Paradigms for parameterized enumeration.
\newblock {\em Theory Comput. Syst.}, 60(4):737--758, 2017.
\newblock URL: \url{https://doi.org/10.1007/s00224-016-9702-4}, \href
  {http://dx.doi.org/10.1007/s00224-016-9702-4}
  {\path{doi:10.1007/s00224-016-9702-4}}.

\bibitem{CreignouOS11}
N.~Creignou, F.~Olive, and J.~Schmidt.
\newblock Enumerating all solutions of a boolean {CSP} by non-decreasing
  weight.
\newblock In K.~A. Sakallah and L.~Simon, editors, {\em Theory and Applications
  of Satisfiability Testing - {SAT} 2011 - 14th International Conference, {SAT}
  2011, Ann Arbor, MI, USA, June 19-22, 2011. Proceedings}, volume 6695 of {\em
  Lecture Notes in Computer Science}, pages 120--133. Springer, 2011.
\newblock URL: \url{https://dx.doi.org/10.1007/978-3-642-21581-0_11}, \href
  {http://dx.doi.org/10.1007/978-3-642-21581-0_11}
  {\path{doi:10.1007/978-3-642-21581-0_11}}.

\bibitem{DowneyFellows99}
R.~G. Downey and M.~R. Fellows.
\newblock {\em Parameterized Complexity}.
\newblock New York, NY, USA, 1999.
\newblock URL: \url{http://dx.doi.org/10.1007/978-1-4612-0515-9}, \href
  {http://dx.doi.org/10.1007/978-1-4612-0515-9}
  {\path{doi:10.1007/978-1-4612-0515-9}}.

\bibitem{DowneyFellows13}
R.~G. Downey and M.~R. Fellows.
\newblock {\em Fundamentals of Parameterized Complexity}.
\newblock Texts in Computer Science. London, UK, 2013.
\newblock URL: \url{http://dx.doi.org/10.1007/978-1-4471-5559-1}, \href
  {http://dx.doi.org/10.1007/978-1-4471-5559-1}
  {\path{doi:10.1007/978-1-4471-5559-1}}.

\bibitem{DBLP:conf/csl/DurandS11}
A.~Durand and Y.~Strozecki.
\newblock Enumeration complexity of logical query problems with second-order
  variables.
\newblock In M.~Bezem, editor, {\em Computer Science Logic, 25th International
  Workshop / 20th Annual Conference of the EACSL, {CSL} 2011, September 12-15,
  2011, Bergen, Norway, Proceedings}, volume~12 of {\em LIPIcs}, pages
  189--202. Schloss Dagstuhl - Leibniz-Zentrum fuer Informatik, 2011.
\newblock URL: \url{https://doi.org/10.4230/LIPIcs.CSL.2011.189}, \href
  {http://dx.doi.org/10.4230/LIPIcs.CSL.2011.189}
  {\path{doi:10.4230/LIPIcs.CSL.2011.189}}.

\bibitem{DBLP:journals/siamcomp/EiterGM03}
T.~Eiter, G.~Gottlob, and K.~Makino.
\newblock New results on monotone dualization and generating hypergraph
  transversals.
\newblock {\em {SIAM} J. Comput.}, 32(2):514--537, 2003.
\newblock URL: \url{https://doi.org/10.1137/S009753970240639X}.

\bibitem{DBLP:conf/lata/FichteHS18}
J.~K. Fichte, M.~Hecher, and I.~Schindler.
\newblock Default logic and bounded treewidth.
\newblock In S.~T. Klein, C.~Martín-Vide, and D.~Shapira, editors, {\em
  Language and Automata Theory and Applications - 12th International
  Conference, {LATA} 2018, Ramat Gan, Israel, April 9-11, 2018, Proceedings},
  volume 10792 of {\em Lecture Notes in Computer Science}, pages 130--142.
  Springer, 2018.
\newblock URL: \url{https://doi.org/10.1007/978-3-319-77313-1}, \href
  {http://dx.doi.org/10.1007/978-3-319-77313-1_10}
  {\path{doi:10.1007/978-3-319-77313-1_10}}.

\bibitem{FlumGrohe06}
J.~Flum and M.~Grohe.
\newblock {\em Parameterized Complexity Theory}.
\newblock Texts in Theoretical Computer Science. An {EATCS} Series. Springer,
  2006.
\newblock URL: \url{https://dx.doi.org/10.1007/3-540-29953-X}, \href
  {http://dx.doi.org/10.1007/3-540-29953-X} {\path{doi:10.1007/3-540-29953-X}}.

\bibitem{DBLP:journals/jal/FredmanK96}
M.~L. Fredman and L.~Khachiyan.
\newblock On the complexity of dualization of monotone disjunctive normal
  forms.
\newblock {\em J. Algorithms}, 21(3):618--628, 1996.
\newblock URL: \url{https://doi.org/10.1006/jagm.1996.0062}, \href
  {http://dx.doi.org/10.1006/jagm.1996.0062}
  {\path{doi:10.1006/jagm.1996.0062}}.

\bibitem{gp17}
P.~W. Goldberg and C.~H. Papadimitriou.
\newblock Towards a unified complexity theory of total functions.
\newblock {\em Electronic Colloquium on Computational Complexity {(ECCC)}},
  24:56, 2017.
\newblock URL: \url{https://eccc.weizmann.ac.il/report/2017/056}.

\bibitem{hs65}
J.~Hartmanis and R.~E. Stearns.
\newblock On the computational complexity of algorithms.
\newblock {\em Trans. Amer. Math. Soc.}, 117:285--306, 1965.
\newblock \href {http://dx.doi.org/10.1090/S0002-9947-1965-0170805-7}
  {\path{doi:10.1090/S0002-9947-1965-0170805-7}}.

\bibitem{DBLP:journals/jcss/JohnsonPY88}
D.~S. Johnson, C.~H. Papadimitriou, and M.~Yannakakis.
\newblock How easy is local search?
\newblock {\em J. Comput. Syst. Sci.}, 37(1):79--100, 1988.
\newblock URL: \url{https://doi.org/10.1016/0022-0000(88)90046-3}, \href
  {http://dx.doi.org/10.1016/0022-0000(88)90046-3}
  {\path{doi:10.1016/0022-0000(88)90046-3}}.

\bibitem{JohnsonPY88}
D.~S. Johnson, C.~H. Papadimitriou, and M.~Yannakakis.
\newblock On generating all maximal independent sets.
\newblock {\em Inf. Process. Lett.}, 27(3):119--123, 1988.
\newblock URL: \url{https://dx.doi.org/10.1016/0020-0190(88)90065-8}, \href
  {http://dx.doi.org/10.1016/0020-0190(88)90065-8}
  {\path{doi:10.1016/0020-0190(88)90065-8}}.

\bibitem{DBLP:journals/siamdm/KhachiyanBEGM05}
L.~G. Khachiyan, E.~Boros, K.~M. Elbassioni, V.~Gurvich, and K.~Makino.
\newblock On the complexity of some enumeration problems for matroids.
\newblock {\em {SIAM} J. Discrete Math.}, 19(4):966--984, 2005.
\newblock URL: \url{https://doi.org/10.1137/S0895480103428338}, \href
  {http://dx.doi.org/10.1137/S0895480103428338}
  {\path{doi:10.1137/S0895480103428338}}.

\bibitem{ms16}
A.~Mary and Y.~Strozecki.
\newblock {Efficient Enumeration of Solutions Produced by Closure Operations}.
\newblock In N.~Ollinger and H.~Vollmer, editors, {\em 33rd Symposium on
  Theoretical Aspects of Computer Science (STACS 2016)}, volume~47 of {\em
  Leibniz International Proceedings in Informatics (LIPIcs)}, pages
  52:1--52:13, Dagstuhl, Germany, 2016. Schloss Dagstuhl--Leibniz-Zentrum fuer
  Informatik.
\newblock URL: \url{http://drops.dagstuhl.de/opus/volltexte/2016/5753}, \href
  {http://dx.doi.org/10.4230/LIPIcs.STACS.2016.52}
  {\path{doi:10.4230/LIPIcs.STACS.2016.52}}.

\bibitem{mp91}
N.~Megiddo and C.~H. Papadimitriou.
\newblock On total functions, existence theorems and computational complexity.
\newblock {\em Theor. Comput. Sci.}, 81(2):317--324, 1991.
\newblock URL: \url{https://doi.org/10.1016/0304-3975(91)90200-L}, \href
  {http://dx.doi.org/10.1016/0304-3975(91)90200-L}
  {\path{doi:10.1016/0304-3975(91)90200-L}}.

\bibitem{DBLP:conf/foiks/0001MR18}
A.~Meier and C.~Reinbold.
\newblock Enumeration complexity of poor man's propositional dependence logic.
\newblock In F.~Ferrarotti and S.~Woltran, editors, {\em Foundations of
  Information and Knowledge Systems - 10th International Symposium, FoIKS 2018,
  Budapest, Hungary, May 14-18, 2018. Proceedings}, Lecture Notes in Computer
  Science. Springer, 2018.
\newblock URL: \url{https://doi.org/10.1007/978-3-319-30024-5_15}.

\bibitem{nie06}
R.~Niedermeier.
\newblock {\em Invitation to Fixed-Parameter Algorithms}.
\newblock Oxford University Press, 2006.

\bibitem{DBLP:conf/lpar/PichlerRW10}
R.~Pichler, S.~R{\"{u}}mmele, and S.~Woltran.
\newblock Counting and enumeration problems with bounded treewidth.
\newblock In E.~M. Clarke and A.~Voronkov, editors, {\em Logic for Programming,
  Artificial Intelligence, and Reasoning - 16th International Conference,
  LPAR-16, Dakar, Senegal, April 25-May 1, 2010, Revised Selected Papers},
  volume 6355 of {\em Lecture Notes in Computer Science}, pages 387--404.
  Springer, 2010.
\newblock URL: \url{https://doi.org/10.1007/978-3-642-17511-4_22}, \href
  {http://dx.doi.org/10.1007/978-3-642-17511-4_22}
  {\path{doi:10.1007/978-3-642-17511-4_22}}.

\bibitem{rt75}
R.~C. Read and R.~E. Tarjan.
\newblock Bounds on backtrack algorithms for listing cycles, paths, and
  spanning trees.
\newblock {\em Networks}, 5(3):237--252, 1975.

\bibitem{Schmidt09}
J.~Schmidt.
\newblock Enumeration: Algorithms and complexity.
\newblock Master's thesis, Leibniz Universit\"at Hannover, 2009.
\newblock Online at
  \url{https://www.thi.uni-hannover.de/fileadmin/forschung/arbeiten/schmidt-da.pdf}.

\bibitem{DBLP:journals/mst/Strozecki13}
Y.~Strozecki.
\newblock On enumerating monomials and other combinatorial structures by
  polynomial interpolation.
\newblock {\em Theory Comput. Syst.}, 53(4):532--568, 2013.
\newblock URL: \url{https://doi.org/10.1007/s00224-012-9442-z}, \href
  {http://dx.doi.org/10.1007/s00224-012-9442-z}
  {\path{doi:10.1007/s00224-012-9442-z}}.

\end{thebibliography}

\end{document}